\documentclass[reqno,11pt]{amsart}

\usepackage{todonotes}
\usepackage{comment}
\usepackage[english]{babel}
\usepackage{amsfonts, amsmath, amssymb, amsthm}

\usepackage{hyperref,xcolor}
\usepackage{mathtools}
\usepackage[all]{xy}
%\mathtoolsset{showonlyrefs}	% add this in to only let referenced equations get a number attached to them

\setlength{\textheight}{8.50in} \setlength{\oddsidemargin}{0.00in}
\setlength{\evensidemargin}{0.00in} \setlength{\textwidth}{6.08in}
\setlength{\topmargin}{0.00in} \setlength{\headheight}{0.18in}
\setlength{\marginparwidth}{1.0in}
\setlength{\abovedisplayskip}{0.2in}
\setlength{\belowdisplayskip}{0.2in}
\setlength{\parskip}{0.05in}

\usepackage{amsthm}
\newtheorem{definition}{Definition}%[section]
\newtheorem{lemma}[definition]{Lemma}%[section]
\newtheorem{claim}[definition]{Claim}%[section]
\newtheorem{proposition}[definition]{Proposition}%[section]

\newtheorem{theorem}[definition]{Theorem}%[section]
\newtheorem{corollary}[definition]{Corollary}%[section]
\newtheorem{conjecture}[definition]{Conjecture}%[section]
\theoremstyle{remark}

\newtheorem{remark}[definition]{Remark}%[section]

%\numberwithin{equation}{section}

% text spacing/formatting

% Notations

% Fundamental

\def\Re{\operatorname{Re}}

\def\C{\mathbb{C}}
\def\R{\mathbb{R}}

\def\S{\mathbb{S}}

% Tensorial

\newcommand{\dVol}{\mathrm{dVol}}

\newcommand{\Lie}{\mathcal{L}}

\newcommand{\A}{\mathbb{A}}
\newcommand{\B}{\mathbb{B}}
\newcommand{\LL}{\mathbb{L}}

\newcommand{\sdet}{\mathrm{sdet}}

\DeclareMathOperator{\tr}{tr}
%\def\End{\mathrm{End}}

% Operators

%\def\d{\mathrm{d}}

\newcommand{\intl}{\int\limits}

 % maybe use \mathscr with package mathrsfs

%%%%% %%%%% %%%%% %%%%% %%%%% %%%%% %%%%% %%%%% %%%%% %%%%% %%%%% %%%%%
\title[Zeta Function from Field Theory]
{Ruelle Zeta Function from Field Theory}

\author[C. Hadfield]{Charles Hadfield}
\address{IBM T.J. Watson Research Center, 1101 Kitchawan Rd, Yorktown Heights, NY 10598, USA}
\email{charles.hadfield@ibm.com}

\author[S. Kandel]{Santosh Kandel}
\address{Mathematics Institute, University of Freiburg, Freiburg, 79104, Germany}
\email{skandel1@alumni.nd.edu}

\author[M. Schiavina]{Michele Schiavina}
\address{ETH Z\"urich
\newline\indent 
Department of Mathematics, R\"amistrasse 101, 8092 Z\"urich, Switzerland, \&
\newline\indent 
Institute for theoretical physics, Wolfgang-Pauli strasse 27, 8093 Z\"urich, Switzerland}
\email{micschia@phys.ethz.ch}

%%%%% %%%%% %%%%% %%%%% %%%%% %%%%% %%%%% %%%%% %%%%% %%%%% %%%%% %%%%%
\begin{document}

\begin{abstract}
We propose a field-theoretic interpretation of Ruelle zeta function, and show how it can be seen as the partition function for $BF$ theory when an unusual gauge fixing condition on contact manifolds is imposed. This suggests an alternative rephrasing of a conjecture due to Fried on the equivalence between Ruelle zeta function and analytic torsion, in terms of homotopies of Lagrangian submanifolds.
\end{abstract}

\maketitle

%in the Batalin--Vilkovisky formalism

%%%%% %%%%% %%%%% %%%%% %%%%% %%%%% %%%%% %%%%% %%%%% %%%%% %%%%% %%%%%
\tableofcontents

\section*{Introduction}
Quantum field theory is a useful tool in many areas of pure and applied mathematics. It provides a number of precise answers, often involving insight coming from statements that are theorems in finite dimensions, and that need to be appropriately checked and generalised in infinite dimensions.

A positive example of this is the interpretation by Schwarz of the Ray--Singer analytic torsion in terms of a partition function for a degenerate functional  
\cite{schwarz1978partition, schwarz1979partition}. 
The main ingredient in Schwarz's construction is a  {topological field theory} involving differential forms, which enjoys a symmetry given by the shift of closed forms by exact ones
\cite{BBRT91,cattaneo1995topological, cattaneo2001higher}. This is known nowadays with the name of $BF$ theory.

From a field-theoretic point of view, such symmetry needs to be removed, or  {gauge fixed}, as it represents a fundamental redundancy in the description. One possible way to do this is by choosing a reference metric $g$ and enforcing a $g$-dependent condition on fields\footnote{This is often called  {Lorenz gauge fixing}.}. It allows to compute the partition function of the theory - the starting point for quantum considerations on the system - and one is left to show that the choice of metric is immaterial. The proof that such choice of metric is irrelevant was given by Schwarz for the partition function of abelian $BF$ theory, and it is tantamout to the statement of independence of the analytic torsion on the metric used to define a Laplacian on the underlying manifold.

There are several ways of encoding a choice of gauge fixing within the framework of field theory, starting from the original idea of Faddeev and Popov 
\cite{faddeev2016feynman}, 
later understood in terms of Lie algebra cohomology by Becchi, Rouet, Stora and Tyutin 
\cite{becchi1974abelian, becchi1975renormalization, becchi1976renormalization, tyutin1975gauge}. 
A more general approach follows the ideas of by Batalin and Vilkovisky \cite{batalin1983quantization, batalin1984gauge}, 
and implements the choice of a gauge as the choice of a Lagrangian submanifold in an appropriate (graded)-symplectic manifold of fields $\mathcal{F}$. In this context, gauge-fixing independence is phrased in terms of isotopies of embedded Lagrangian submanifolds, and needs to be proven in some appropriate regularisations scheme. In finite dimensions this is a theorem: the partition function for an action functional $S$ that satisfies the  {quantum master equation} (a differential condition on $S$) does not depend on the choice of a particular Lagrangian submanifold inside a smooth family $\LL_t\subset\mathcal{F}$. Observe that this statement can be phrased as local constancy of the partition function w.r.t a parametrisation of the Lagrangian homotopy.

Leaving field theory aside for a moment, consider a closed manifold $M$ endowed with an Anosov vector field
(Definition~\ref{def:Anosov}).
The flow associated with the vector field is a typical example of a dynamical system displaying ``hard chaos"
\cite{gutzwiller1991chaos}.
An important example of such a dynamical system is obtained from a Riemannian manifold $(\Sigma, g)$ whose sectional curvature is negative; then $M=S^*_g\Sigma$, the unit-cotangent bundle of $\Sigma$ (a sphere bundle), is such that the Reeb vector field $X$ associated with the natural contact structure is an Anosov vector field and its flow coincides with the geodesic flow.

If an Anosov flow (generated by the vector field $X$) admits closed orbits, one defines a (dynamical) Ruelle zeta function $\zeta_X(\lambda)$ to count lengths of closed orbits associated with the flow in a similar spirit to how the Riemann zeta function counts prime numbers
\cite{ruelle1976zeta, ruelle1986resonances}. 
The zeta function also may be defined in the presence of a representation $\rho$ of $\pi_1(M)$ which provides a flat vector bundle over $M$, this leads to a  {twisted} zeta function. 
The chaotic nature of the dynamical system ensures that the zeta function is well-defined for $\Re(\lambda)\gg 1$, however work is required to show that the function extends meromorphically to the whole complex plane
\cite{giulietti2013anosov}.
Conjecture~\ref{conj:fried}, due to Fried 
\cite{fried1986analytic},
proposes that when $M=S^*_g\Sigma$, the Ruelle zeta function (evaluated at zero) exactly computes the analytic torsion of the associated sphere bundle. To connect this to field theory, we observe that this means that Ruelle zeta function is expected to compute - in Schwarz's terms - the partition function of $BF$ theory in a given (metric dependent) gauge fixing.

Fried's conjecture has received considerable attention recently. 
The proposed equality was confirmed in
\cite{fried1986analytic} 
for $\Sigma$ a hyperbolic manifold, and conjectured in 
\cite{fried1987lefschetz} 
that it also holds for compact locally symmetric spaces with non positive curvature. 
Conjecture~\ref{conj:fried}, as we state it, appears in 
\cite{fried1995meromorphic}.
A more precise version for locally symmetric manifolds has been proved in
\cite{shen2017analytic}
following
\cite{moscovici1991r}.
In the variable curvature case a perturbative result has been obtained in
\cite{dang2018fried}
and extended in
\cite{chaubet2019dynamical}.
A surprising result in the case of surfaces with variable negative curvature, but without reference to an acyclic representation, showed the zeta function at zero is determined by the topology of the surface
\cite{dyatlov2017ruelle}.
This has been extended to the case of surfaces with boundary
\cite{hadfield2018zeta}
and to higher dimensional closed manifolds perturbatively close to hyperbolic space
\cite{kuster2019pollicott}.
However Fried's conjecture, along with its three star bounty
\cite[Section 3, footnote 6]{zworski2017mathematical},
remains open.

\subsection*{What to expect from this paper}
We present a new class of gauge fixings for $BF$ theory on contact manifolds based on the Reeb vector field associated with the contact structure; we call this the  {contact gauge}
in Definition~\ref{def:contact-gauge}.
We then go on to show that, on sphere bundles with an Anosov--Reeb vector field, the Ruelle zeta function can be interpreted as an appropriately regularised determinant for the Lie derivative operator $\mathcal{L}_X$ on $k$-forms in the kernel of the contraction $\iota_X$. Taking this regularised determinant as the definition for the partition function of $BF$ theory in the contact gauge allows us to conclude that this coincides with the Ruelle zeta function. This point of view is analogous to Schwarz's calculation of the partition function of $BF$ theory (in the metric gauge), whose output is the analytic torsion, and to the recent proof of Chern--Gauss--Bonnet Theorem that has been given with similar techniques in \cite{Berwick2015} (see also \cite{cordes1995lectures}).

As a consequence, we relate the  {expected} gauge-fixing independence of the partition function of $BF$ theory to Fried's conjecture; in particular, we show how modern proofs of the conjecture for certain classes of manifolds (see \cite{dang2018fried}) can be taken as a proof of gauge-fixing independence. On the other hand, we believe that the field theoretic presentation of the Ruelle zeta function provided in this paper will allow the problem to be tackled from a different angle: by means of homotopies of Lagrangian submanifolds.

To this aim, we setup a convenient construction to compare Anosov vector fields that are related to a choice of a metric on a base manifold $\Sigma$ (Section \ref{s:homotopies}). By means of a natural construction for sphere bundles, we map smooth paths of metrics into smooth paths of Anosov vector fields, effectively constructing an isotopy between their associated Lagrangian submanifolds. This, together with the crucial local-constancy results of \cite{dang2018fried}, allow us to test our approach to the known case of 2d surfaces --- Theorem \ref{thm:Friedconjecture2d} provides an alternative proof of Fried's conjecture on surfaces --- and interprets it as gauge-fixing independence for $BF$ theory. 

From the point of view of algebraic topology, this result suggests that, under certain assumptions, $\iota_X$ can be made into a chain contraction for the de Rham complex, namely one can construct $\eta_X=(\mathcal{L}_X)^{-1} \iota_X$ with the appropriate conditions of nondegeneracy of $\mathcal{L}_X$. This intepretation appears to be related with the notion of a  {dynamical torsion} introduced in \cite{chaubet2019dynamical}. There appears to be a sweet spot at the intersection of Anosov and Reeb vector fields where the independence of the ``torsion'' of the de Rham complex on the choice of a chain contraction, and independence of the partition function of $BF$ theory on a choice of gauge fixing appear to be aspects of the same statement, expressed by Fried's conjecture.

This work is mostly addressed to the mathematical physics community working with or closely related to field theory in the Batalin--Vilkovisky formalism, but it is also aimed at the community interested in the microlocal analysis of Anosov/geodesic flows and Fried's conjecture. Therefore, we will present some basic background on field theory with symmetries to set the stage, terminology and expectations, but we will not present a complete treatment of the mathematics behind it. Results and constructions that will be somewhat assumed in this exposition of field theory can be found, e.g., in
\cite{anderson1992introduction,deligne1999classical, cattaneo2012classical,delgado2017lagrangian}.

Our main goal is to present a novel link between field theory and geometric and microlocal analysis, that will hopefully allow to import techniques across research fields, and stimulate fruitful interaction between scientific communities.

Section \ref{s:LFT} is an overview on Lagrangian field theory aimed at introducing the Batalin--Vilkovisky formalism and the problem of gauge fixing. It sets the stage for the field-theoretic interpretations that will follow.

Section \ref{Sect:geoprel} establishes the geometric conventions and notations, claryfing what incremental/alternative data one needs at different stages, and briefly describes the analytic torsion and Anosov dynamics.

In Section \ref{s:RZF} we introduce Ruelle zeta function and its $k$-form decomposition, and state Fried's conjecture. We interpret the zeta function as a regularised (super)determinant.

In Section \ref{s:BF} we describe a field theory called  {$BF$ theory}, we summarise the famous interpretation (due to Schwarz) of the analytic torsion in terms of the partition function of $BF$ theory, and introduce a new gauge fixing condition on contact manifolds. We show how, with that gauge-fixing, the partition function of $BF$ theory computes the Ruelle zeta function of the associated geodesic/Anosov flow.

Finally, in Section \ref{s:homotopies} we interpret Fried's conjecture in terms of gauge-fixing independence of $BF$ theory in the BV formalism, and suggest a  construction for sphere bundles that allows to present explicit homotopies between Lagrangian submanifolds.

\section{Lagrangian field theory, the Batalin--Vilkovisky formalism and regularised determinants}\label{s:LFT}
In this section we will review the basics of the Batalin--Vilkovisky (BV) formalism 
\cite{batalin1983quantization, batalin1984gauge} 
for Lagrangian field theories and how it handles gauge fixing. We will use a particular kind of regularisation based on the notion of flat traces to define determinants of operators and partition functions of quadratic functionals.

\subsection{Classical field theory, symmetries and quantisation}
The standard framework for Lagrangian field theories is as follows. To a compact manifold $M$, possibly endowed with extra geometric data, like a Riemannian metric or a contact structure, we associate a space of classical fields ${F}_M$, which is usually modelled on the space of sections of some vector bundle\footnote{More generally, a sheaf.} $E\to M$, together with a local functional $S_M$, called action functional. Local here means that it has the form of an integral over $M$ of a density-valued functional of the fields and a finite number of jets\footnote{There is an equivalent formulation of this in the variational bi-complex 
\cite{anderson1992introduction, deligne1999classical}, 
where the full jet bundle is taken into account.}:
\begin{equation}
    S_M= \int\limits_{M} L_M[\phi,\partial^I\phi],
\end{equation}
where $I$ is a finite multi-index and $L_M$ is called Lagrangian density. For simplicity we will consider compact manifolds without boundary, although it is possible to adapt the construction to non-compact ones or manifolds with boundary (see e.g. \cite{fredenhagen2013batalin, cattaneo2012classical}). 

The dynamical content of the theory is encoded in the Euler--Lagrange locus ${EL}[S_M]$, the space of solution of the Euler--Lagrange equations coming from the variational problem for $S_M$. In other words, the Euler--Lagrange locus is the set of critical points of $S_M$.

The action functional might enjoy a symmetry. That is, it might be invariant under some transformation of the fields, for example when considering Lie algebra actions on fields taking values in Lie algebra modules. Symmetries are usually described by a (smooth) distribution $D_M\subset TF_M$, and they make the critical points of the action functional degenerate\footnote{We will only be concerned with continuous symmetries.} and will become an issue when dealing with perturbative quantisation of the theory (see below). In what follows we will only consider symmetry distributions that are involutive.

Quantisation, loosely speaking, is meant to replace the (commutative) algebra of functions over the space of physical configurations of the system with some (noncommutative) algebra of operators over a suitable vector space, also called the space of quantum states. Without delving too much into how this is achieved in general, for our purposes it will be important to mention that one possible procedure starts by making sense of the following expression:
\begin{align}\label{Partitionfunction}
    Z=\int\limits \mathrm{exp}(\frac{i}{\hbar} S_M),
\end{align}
usually called the partition function, where the integral sign should ideally represent actual integration over ${F}_M$, with some measure. However, an appropriate integration theory for such (infinite dimensional) spaces of fields is generally not available, and one defines the previous expression as a formal power-series expansion in the parameter $\hbar$. This approach, however, requires the critical points of $S_M$ to be isolated, as it involves a saddle point or stationary phase approximation around critical points. It therefore automatically fails in the presence of symmetries, unless appropriate prescriptions are enforced. 

We choose to deal with this problem by means of the Batalin--Vilkovisky (BV) formalism.

\subsection{Cohomological approach and the BV complex}\label{Sect:BV}
Degenerate functionals are usually accompanied by involutive (symmetry) distributions $D_M$, and the space of inequivalent field configurations is the quotient ${EL}[S_M]/D_M$. Most of the times the quotient is singular and one looks for a replacement for it.

A  {resolution} of ${EL}[S_M]/D_M$ is given by a complex $(C^\bullet,d_C)$ such that\footnote{In some practical cases the vanishing of the negative cohomology is not guaranteed. We will anyway not need this condition in what follows.} for all $i>0$
\begin{align}\label{KTres}
    H^{-i}(C^\bullet)=0, \qquad H^0(C^\bullet)\simeq C^\infty({EL}[S_M]/D_M).
\end{align}

One way to obtain a resolution is by first localising to the submanifold ${EL}[S_M]$ constructing the Koszul--Tate complex, and then following the Chevalley--Eilenberg procedure to describe $D_M$-invariant functions on it (see \cite{stasheff1998secret} for a ``geometric" jet-bundle explanation of this and \cite{fredenhagen2013batalin} for a more ``algebraic" one).

The Batalin-Vilkovisky formalism is essentially the interpretation of said complex as the space of functions over a $(-1)$-symplectic graded manifold $(\mathcal{F}_{BV},\Omega_{BV})$ \cite{henneaux1990lectures, stasheff1998secret, cattaneo2011introduction, cattaneo2012classical, cattaneo2018perturbative},
whose degree-0 part coincides with the original space of fields $F_M$, endowed with an odd vector field of degree-$1$ $Q\in C^\infty(\mathcal{F}_{BV},T[1]\mathcal{F}_{BV})$ such that\footnote{$Q$ is essentially the derivation $d_C$ interpreted as a vector field.} $[Q,Q]=0$ and a degree-0 functional $\mathcal{S}_M\colon \mathcal{F}_{BV}\to \mathbb{R}$ satisfying
\begin{equation}\label{Eq:CME}
    \iota_Q\iota_Q\Omega_{BV}=\{\mathcal{S}_M,\mathcal{S}_M\}_{\Omega_{BV}}=0,
\end{equation}
with $\{\cdot,\cdot\}_{\Omega_{BV}}$ the Poisson bracket associated with the symplectic structure $\Omega_{BV}$, and the compatibility condition
\begin{equation}\label{HamiltonEquation}
    \iota_Q\Omega_{BV} = d \mathcal{S}_M.
\end{equation}
\begin{remark}
In infinite dimensions one models $\mathcal{F}_{BV}$ on some appropriate space of sections of the jet bundle of a vector bundle $E\to M$. Then, the de Rham differential in Equation~\eqref{HamiltonEquation} is replaced with $\delta$, the variation operator interpreted as the vertical differential on local functionals over $\mathcal{F}_{BV}$ (see \cite{anderson1992introduction}). Observe that we could take Equation \eqref{Eq:CME} as a definition of Poisson brackets in infinite dimensions.
\end{remark}

On $C^\infty(\mathcal{F}_{BV})$ one constructs another (second order) differential ($\Delta_{BV}^2=0$) called BV-Laplacian and defines  {gauge fixing} to be the choice of a Lagrangian submanifold $\mathbb{L}\subset \mathcal{F}_{BV}$. The main results are as follows:

\begin{theorem}\label{BVtheorem}
Let $(\mathcal{F}_{BV},\Omega_{BV})$ be a finite dimensional $(-1)$-symplectic graded manifold, with a measure $\mu$ and the BV Laplacian $\Delta_{\mu}$, a coboundary operator defined on $C^\infty(\mathcal{F}_{BV})$ such that for all $ f\in C^\infty(\mathcal{F}_{BV})$ we have
\begin{equation}\label{BV-div}
	\Delta_{\mu} f = -\frac12 div_{\mu}(X_f)
\end{equation}
with $X_f$ the Hamiltonian vector field of $f$ with respect to $\Omega_{BV}$. Assuming that $\Delta_{\mu_{BV}} f=0$ and $g=\Delta_{\mu_{BV}} h$, with $f,g,h\in C^\infty(\mathcal{F}_{BV})$ then:
\begin{itemize}
    \item for any Lagrangian submanifold $\mathbb{L}\subset\mathcal{F}_{BV}$
        \begin{align}
            \intl_{\mathbb{L}} g\ \mu_{BV}\vert_{\mathbb{L}} =0;
        \end{align}
    \item given a continuous family of Lagrangian submanifolds $\mathbb{L}_t$
        \begin{align}
            \frac{d}{dt}\intl_{\mathbb{L}_t} f\ \mu_{BV} \vert_{\mathbb{L}_t}= 0.
        \end{align}
\end{itemize}
\end{theorem}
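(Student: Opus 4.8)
The plan is to reduce both statements to a single local computation in Darboux coordinates followed by an application of Stokes' theorem; since $\mathcal{F}_{BV}$ is finite dimensional there are no analytic subtleties and the whole content is the graded-symplectic linear algebra together with the behaviour of $\Delta_{\mu_{BV}}$ under restriction to a Lagrangian submanifold. First I would fix a Darboux atlas, i.e.\ local coordinates $(x^i,\xi_i)$ in which $\Omega_{BV}=\sum_i dx^i\,d\xi_i$ and $\mu_{BV}$ is a constant multiple of the coordinate (Berezinian) density, so that by \eqref{BV-div} one has $\Delta_{\mu_{BV}}=\pm\sum_i\partial_{x^i}\partial_{\xi_i}$ up to signs fixed by the parities of the coordinates. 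The restriction of a density on $(\mathcal{F}_{BV},\Omega_{BV})$ to a half-dimensional submanifold is canonical in this framework (the half-density picture of Khudaverdian--Schwarz), and one records that it is chart-independent.

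Using a partition of unity subordinate to such an atlas, it then suffices to prove the local claim: if $\mathbb{L}$ is presented, after a linear symplectomorphism swapping a suitable subset of the conjugate pairs $(x^i,\xi_i)$, as the graph $\{\xi_i=\partial_{x^i}\Psi(x)\}$ of the differential of a \emph{gauge-fixing fermion} $\Psi$ --- every Lagrangian transverse to the fibres $\{x=\mathrm{const}\}$ of the projection onto the $x$-coordinates is of this form, and the general case is reduced to it by such a swap --- then for every $h$ with integrable support
\begin{align}
\intl_{\mathbb{L}}(\Delta_{\mu_{BV}}h)\big\vert_{\mathbb{L}}\ \mu_{BV}\vert_{\mathbb{L}}=0 .
\end{align}
This is precisely the first bullet with $g=\Delta_{\mu_{BV}}h$: writing $(\Delta_{\mu_{BV}}h)\vert_{\mathbb{L}}$ in the coordinates $x$ on $\mathbb{L}$, it is a total divergence $\sum_i\partial_{x^i}(\cdot)$ --- the term that would spoil this contains $\partial_{x^i}\partial_{x^j}\Psi$ contracted against $\partial_{\xi_i}\partial_{\xi_j}h$ and vanishes by the symmetry of the former against the graded antisymmetry of the latter --- so Stokes on the $x$-space gives zero.

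For the second bullet I would differentiate under the integral sign. A tangent vector to the family $\mathbb{L}_t$ inside the \emph{Lagrangian Grassmannian} is, after identifying infinitesimal deformations of a Lagrangian with (odd) functions on it, represented by the Hamiltonian vector field $X_{g_t}$ of some $g_t\in C^\infty(\mathcal{F}_{BV})$; covering the parameter interval by finitely many such deformations and transporting the integral along the flow of $X_{g_t}$, Cartan's formula together with \eqref{BV-div} yield
\begin{align}
\frac{d}{dt}\intl_{\mathbb{L}_t}f\ \mu_{BV}\vert_{\mathbb{L}_t}
&=\intl_{\mathbb{L}_t}\big(\{g_t,f\}\pm f\,\Delta_{\mu_{BV}}g_t\big)\,\mu_{BV}\vert_{\mathbb{L}_t}\\
&=\pm\intl_{\mathbb{L}_t}\Delta_{\mu_{BV}}(g_t f)\ \mu_{BV}\vert_{\mathbb{L}_t},
\end{align}
where the last equality uses the standard identity expressing $\{g_t,f\}$ as the failure of $\Delta_{\mu_{BV}}$ to be a graded derivation on the product $g_t f$, together with the hypothesis $\Delta_{\mu_{BV}}f=0$. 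By the first bullet the right-hand side vanishes, so the integral is locally constant in $t$, hence constant.

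The main obstacle I anticipate is bookkeeping rather than conceptual: one must make the graded signs in $\Delta_{\mu_{BV}}$, in the identity for $\Delta_{\mu_{BV}}(g_t f)$, and in the restriction-to-$\mathbb{L}$ formula mutually consistent, and one must check that the local integrand in the key claim is genuinely chart-independent so that the partition-of-unity reduction is legitimate. The two geometric inputs --- the generating-function description of a Lagrangian as a graph over a Lagrangian complement of the fibre, and the identification of infinitesimal deformations of a Lagrangian with functions on it realised by Hamiltonian flows --- are classical, but should be stated carefully to justify differentiating under the integral sign and to guarantee that a continuous family $\mathbb{L}_t$ is covered by finitely many such Hamiltonian deformations.
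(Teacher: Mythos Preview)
The paper does not actually supply a proof of this theorem: it is stated as a foundational result of the Batalin--Vilkovisky formalism and immediately followed by remarks, with the implicit understanding that proofs are available in the cited literature (e.g.\ \cite{batalin1983quantization,batalin1984gauge,schwarz1979partition}). There is therefore nothing in the paper to compare your argument against line by line.

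That said, your outline is the standard proof one finds in the literature (Schwarz, Khudaverdian, and the expositions of Mnev and Cattaneo cited elsewhere in the paper). The reduction to Darboux charts, the gauge-fixing-fermion parametrisation of Lagrangians, the total-divergence identity for $(\Delta_{\mu_{BV}}h)\vert_{\mathbb{L}}$ via the symmetry/antisymmetry cancellation, and the use of the BV-algebra identity $\Delta_{\mu_{BV}}(g_tf)=(\Delta_{\mu_{BV}}g_t)f\pm g_t\,\Delta_{\mu_{BV}}f\pm\{g_t,f\}$ to reduce the second bullet to the first are all correct and exactly the ingredients one expects. Your closing caveats are also apt: the sign bookkeeping is the only genuine labour, and the passage from a ``continuous'' family $\mathbb{L}_t$ (as in the statement) to one admitting Hamiltonian generators requires at least $C^1$ regularity in $t$, which the theorem as stated glosses over.
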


\begin{remark}
The definition of the BV Laplacian in \eqref{BV-div} implies the relations 
\begin{align}
    \Delta_{\mu}(fg) = (\Delta_{\mu}f) g  + (-1)^{|f|} f (\Delta_{\mu}g) + (-1)^{|f|}\{f,g\}_{\Omega_{BV}}\\
    \Delta_{\mu} \{f,g\}_{\Omega_{BV}} = \{\Delta_{\mu}f,g\}_{\Omega_{BV}} 
        + (-1)^{|f|+1} \{f,\Delta_{\mu}g\}_{\Omega_{BV}},
\end{align}
making the tuple $\left(C^\infty(\mathcal{F}_{BV}),\, \cdot\, , \{\cdot,\cdot\}_{\Omega_{BV}},\Delta_\mu\right)$ into a BV algebra \cite{CFL}.
\end{remark}

\begin{remark}
Theorem~\ref{BVtheorem} is stated for finite dimensional manifolds. In this case $\Delta_{\mu}$ always exists. On infinite dimensional manifolds a number of complications arise. One needs an appropriate regularisation of $\Delta_{\mu}$, and the corresponding adaptations of statements in Theorem~\ref{BVtheorem} must be checked. In this paper we are interested in abelian $BF$ theory, which is a non-interacting topological field theory whose partition function (cf. Equation~\eqref{Partitionfunction}) is expected to be independent of the gauge fixing, and has been computed to be the Reidermeister (or equivalently) Analytic torsion of the de Rham complex \cite{schwarz1978partition, schwarz1979partition}. More recent work generalised the BV theorem for certain classes of field theories (and gauge fixings) \cite{costello2011renormalization, costello2016factorization}, while a perturbative approach has been shown to work for the theory at hand in the presence of boundaries \cite{cattaneo2018perturbative}, and with a regularisation coming from cellular decompositions in \cite{cattaneo2017cellular}.
\end{remark}

\begin{remark}\label{Lagsub}
Notice that the notion of Lagrangian submanifold has to be appropriately adapted in infinite dimensions and when dealing with $\mathbb{Z}$-grading. The Lagrangian submanifolds $L$ we will consider in this paper are such that, locally, the symplectic space looks like $L\oplus K$, with the symplectic form given by a nondegenerate pairing between $L$ and $K$. This notion of a Lagrangian submanifold coincides with the one used in \cite{cattaneo2018split}.  Often $L$ can be seen as the vanishing locus of a Poisson subalgebra $\mathcal{I}$ of the Poisson algebra of functions on $\mathcal{F}_{BF}$, which is also isotropic, i.e. $\Omega_{BF}\vert_{L} = 0$. This means that $L$ is isotropic and coisotropic\footnote{This notion coincides with requiring $L$ to be maximal isotropic.}. For a more in depth analysis of Lagrangian submanifolds in infinite dimensions and the symplectic category we refer to \cite{weinstein2010symplectic}, building on
\cite{weinstein1971symplectic}.
\end{remark}

\begin{remark}
For concreteness, in what follows we will discuss field theories where fields are given by differential forms on a manifold. If needed, one can think of the space of fields as a Fr\'echet vector space, but indeed this specification will not be necessary for our purposes. 
\end{remark}

\subsection{Partition functions}
If we look at the quadratic part of the action functional, we can interpret the partition function as a (formal) Gaussian integral. Assume from now on that the action functional is at most quadratic. 

In finite dimensions the result of said integral would be the determinant of the operator featured in the action functional $S_M$. In infinite dimensions, this requires defining an appropriate regularisation of determinants.

The standard approach to partition functions for degenerate quadratic functionals follows from Schwarz 
\cite{schwarz1978partition, schwarz1979partition}, 
where the resolution of (the kernel of) an elliptic differential operator on a closed Riemannian manifold is presented, which outputs a (co-)chain complex, and the partition function is given in terms of products of (regularised) determinants of operators associated with the resolving complex. The explicit example for $BF$ theory is given in Section \ref{Sect:ATBF}. In short, the insights from Schwarz allow us to interpret partition functions of quadratic degenerate functionals as (a product of) regularised determinants, provided a suitable resolution can be found such that the regularised determinants of the associated operators exist.

For the purposes of this paper, instead of the standard zeta-function regularisation, we will use the notion of a  {flat-determinant}, based on  {flat-traces}, inspired by Atiyah and Bott's constructions 
\cite{atiyah1964notes, atiyah1968lefschetz}. 
Details on the definition of flat traces and determinant can be found in several places: \cite[Definition 3.12]{batalin1984gauge},
a microlocal version in 
\cite[Section 2.4]{dyatlov2016dynamical},
and a mollifier approach in 
\cite[Section 3.2.2]{baladi2018dynamical}.
Since we will not be concerned with the microlocal analysis of operators, we will avoid discussing the necessary tools to define flat traces, and will only work up to the requirements that they exist for the operators we will consider. In this spirit, we give the following definition.
\begin{definition}\label{flatdet}
Let $A\colon V\to V$ be an operator on an appropriate inner product space, such that the flat trace $\mathrm{tr}^\flat(\mathrm{exp}(-t(A+\lambda)))$ exists for $\lambda \in \mathbb{C}$. We define the flat determinant of $A+\lambda$ to be
\begin{align}
    \log \det{}^\flat(A + \lambda) 
    \coloneqq - \left.\frac{d}{ds}\right|_{s=0} \left[\frac{1}{\Gamma(s)} \intl_0^{\infty} t^{s-1} \tr^\flat(\exp{(-t(A + \lambda))} - \Pi_\lambda)dt \right]
\end{align}
where $\Pi_\lambda$ is the spectral projector on the kernel of $(A + \lambda)$, whenever the integrals converge.
\end{definition}

\begin{remark} Observe that if $A$ is such that $e^{-t(A + \lambda)}$ is trace-class, then $\tr(e^{-t(A + \lambda)})=\tr^\flat(e^{-t(A+\lambda)})$. As a consequence, if the zeta-regularized determinant of $A+\lambda$ exists, it coincides with the flat-determinant:  ${\det}^{\flat}(A+\lambda)=\det(A+\lambda)$. See e.g. \cite[Proposition 6.8]{baladi2018dynamical} for details. 
\end{remark}

Let $A=\mathrm{diag}(B,C)$ be a graded, degree-preserving (block-diagonal) linear map on a finite dimensional graded vector space. A graded Gaussian integral for $\exp{(-\langle y,Ax\rangle})$ returns $\sdet(A)^{-1}=\frac{\mathrm{det}(C)}{\det(B)}$. More generally, if $A_k$ is the $k$-th component of $A$ acting on a graded space with a finite number of nonzero components, each $A_k$ acting on vectors of degree $k$, we get\footnote{Observe that we are considering parities modulo $2$. In principle a graded determinant would return $\prod_{k=0}^n \det(A)^{k(-1)^k}$. We will not make such a distinction in what follows.}
$$
\sdet(A)=\prod_{k=0}^n \det(A)^{(-1)^k}.
$$
For an introduction to Berezinians and odd integration see, for example, 
\cite[Section 3.8]{mnev2019quantum} 
and 
\cite{voronov1991geometric}, while the original notion was introduced in 
\cite{berezin1983introduction,berezin1975supermanifolds}. 

In infinite dimensions, to an operator $A$ on a graded space we can associate a regularised superdeterminant in the same way, but replacing the determinants on the block operators with their flat-regularised versions. We will use this notion to define the partition function of a degenerate quadratic functional, as follows:  
\begin{definition}\label{Def:partitionfunction}
Let $S\colon V\to \mathbb{R}$ of the form $S= \frac12\int_M (x,Ax)$ for some operator $A\colon V\to V$, with $V$ a (possibly graded) vector space endowed with an inner product $(\cdot,\cdot)$. We define the partition function $Z$ of $S$ to be the square root of the flat (super)-determinant of the (graded) operator $A$:
{\begin{equation}\label{Partitionfunctiondef}
    Z(S)= \intl_{V} e^{i S_M} := |{\sdet}^\flat(A)|^{-\frac12}.
\end{equation}
When the field theory has symmetries, we assume that $V$ is further endowed with a $(-1)$ symplectic form. A choice of a Lagrangian submanifold $\mathbb{L}\subset V$ will be called gauge-fixing, and we define the partition function of $S$ in the gauge-fixing $\mathbb{L}$ to be}
\begin{equation}
    Z(S,\mathbb{L}):=Z(S\vert_{\mathbb{L}}).
\end{equation}
\end{definition}
{
\begin{remark}\label{Rem:omitphase}
To obtain a true generalisation of finite dimensional Gaussian integrals one should append to formula \eqref{Partitionfunctiondef} the phase factor $\exp(-i\frac{\pi}{4}\mathrm{sg}(A))$ where $\mathrm{sg}(A)$ is the signature of the operator $A$, appropriately regularised. Since in what follows we will not discuss the phase of partition functions, we will omit this term from the definition.
\end{remark}
}

\begin{remark}
Observe that often one encounters the situation in which $S_M=\int_M (y,Bx)$ for some operator $B$ and $x,y\in V'$ for some space $V'$. Define\footnote{Note that the $V'$ components in $V$ are not considered to have different degrees, i.e. $A$ is an even matrix of graded operators.} $V=V'\oplus V'$ and 
$$
A= \left(
\begin{array}{cc}
0 & B \\
B^t & 0
\end{array}\right)
$$
so that $(y,Bx) = \frac12 (z,Az)$ with $z= (x,y) \in V$. Then $Z=|\sdet^\flat(A)|^{-\frac12} = |\sdet^\flat(B)|^{-1}$.
\end{remark}

\begin{remark}\label{rem:sdetandshift}
Let $V$ be a graded vector space and $V[k]$ its $k$-shift, so that $(V[k])^i := V^{i+k}$. In particular, if $z\in V[1]$ is a homogeneous element of degree $k$ it will be parametrised by homogeneous elements in $V$ of opposite parity. In particular, if $A$ is a graded linear map on $V$, the Gaussian integral 
\begin{equation}
    \int_{V[1]} e^{-\frac12(z,A,z)} := |{\sdet}^\flat (A)|^{\frac12}.
\end{equation}
\end{remark}

\section{Geometric Setting}\label{Sect:geoprel}

This section establishes the geometry and notations. In the following three sections we progressively introduce more structure to our initial setup of a flat vector bundle over a manifold, whose twisted cohomology is trivial. The plainest setting involves a differentiable manifold endowed with a flat vector bundle. On top of that we consider the introduction of either a Riemannian structure or a contact structure. The intersection of the two will require the base manifold to display Anosov dynamics.

It is useful to distinguish between these geometric settings as when we will only need certain geometric properties when discussing different field theories. This distinction between geometric data with which a differentiable manifold is endowed reflects the practice of complementing topological theories with additional geometric structures (e.g. Riemannian or contact) for the sake of gauge fixing.

\subsection{Flat vector bundle}\label{subsec:geo-vanilla}
Let $M$ be an $N$-dimensional compact manifold without boundary which is oriented and connected. Let $\rho:\pi_1(M)\to U(\C^r)$ denote a unitary representation. This representation endows $M$ with a Hermitian vector bundle $(E,h)$ of rank $r$ with flat connection $\nabla$. We collectively denote this data
by $(M,E)$.

Let $\Omega^\bullet(M;E)$ denote the space of (smooth) differential forms on $M$ taking values in $E$, and let
\begin{align}
    d_\nabla\equiv d_k \colon \Omega^k(M,E) \to \Omega^{k+1}(M,E)
\end{align}
be two notations for the twisted de Rham differential. We denote by $H^\bullet(M;E)$
the cohomology associated to the twisted de Rham complex $(\Omega^\bullet(M;E), d_\nabla)$, with Betti numbers $\beta_k:= \dim H^k(M;E)$.

From now on we will assume that $(M,E)$ is such that its twisted de Rham complex is acyclic, i.e. $\beta_k=0$ for all $0\le k \le N$.

\subsection{Riemannian structure and analytic torsion}\label{subsec:geo-riemann}

Suppose $(M,E)$ is supplemented with a Riemannian metric $g_M$. 
We collectively denote this data $(M,E,g_M)$. 
Let $\dVol$ denote the associated volume form and let 
$\langle \cdot, \cdot \rangle$
be the inherited inner product on 
$\bigwedge ^\bullet T^*M$.
The Hodge star 
$\star : \bigwedge^k T^*M \to \bigwedge^{N-k}T^*M$
is defined through 
\begin{align}
    u \wedge \star v = \langle u, v \rangle \dVol
\end{align}
for $u,v\in \bigwedge^k T^*M$. This lifts to $E$-valued forms by identifying $E$ with its dual via the Hermitian metric $h$. An inner product is then placed on $\Omega^\bullet(M;E)$ by declaring
\begin{equation}\label{eq:differential-form-inner-prod}
    (u,v)=\int_{M}[u\wedge\star v]^{\mathrm{top}}
\end{equation} 
for $u,v\in\Omega^\bullet(M;E)$ with $\mathrm{top}$ refers to only taking the top-form (degree $N$) part of $u\wedge \star v$.

The de Rham differential has an adjoint
\begin{align}
    d_\nabla^*\equiv d_k^* \colon \Omega^{k+1}(M,E) \to \Omega^k(M,E)
\end{align}
which provides the twisted Laplace--de Rham  operator (henceforth simply  {Laplacian})
\begin{align}
    \Delta_k := (d_\nabla^* + d_\nabla)^2 : \Omega^k(M;E) \to \Omega^k(M;E).
\end{align}
When acting on 
$L^2(M;\bigwedge^k T^*M \otimes E)$ the Laplacian has nonnegative eigenvalues $\lambda_n\geq 0$. Denote by $\Pi_{\lambda}$ the $L^2$ spectral projector onto the kernel of $\Delta_k+\lambda$ and consider the function
\begin{align}\label{Lapzeta}
	F_{\Delta_k} (\lambda,s)
    &:=
    \frac{1}{\Gamma(s)}	\int_0^\infty t^{s-1} {\tr}^\flat \left( e^{-t(\Delta_k+\lambda)} - \Pi_{\lambda} \right) dt.
\end{align}
The (flat) regularised determinant of the operator $\Delta_k + \lambda$ is
\begin{align}
	\log {\det}^\flat(\Delta_k + \lambda) = - \left.\frac{\partial}{\partial s}\right|_{s=0} F_{\Delta_k}(\lambda, s).
\end{align}
We also write 
$f_{\Delta_k}(s) := F_{\Delta_k}(0,s)$ 
and observe that 
$f_{\Delta_k}(s) = \sum_{\lambda_j>0} {\lambda_j}^{-s}$.

\begin{definition}\label{def:analytic-torsion}
The analytic torsion of $M$ 
\cite{ray1971r}
is defined to be
\begin{align}
	\tau_{\rho}(M)
    :=
    \prod_{k=1}^N  {\det}^\flat(\Delta_k)^{\frac{k}{2}(-1)^{k+1}}.
\end{align}
\end{definition}
Alternatively, we can write
\begin{align}
	2\log \tau_{\rho}(M)
    =
    \sum_{k=1}^{N} (-1)^{k} k \left.\frac{d}{ds}\right|_{s=0} f_{\Delta_k}(s).
\end{align}
The Hodge decomposition provides an orthogonal decomposition of
$L^2(M;\bigwedge^\bullet T^*M\otimes E)$
into exact and coexact forms (no harmonic forms are present due to acyclicity). Introducing 
$d_k^*d_k:=\Delta_k|_{\textrm{coexact}}$ 
we can also write the analytic torsion as
\cite{schwarz1979partition} (see also \cite{mnev2014lecture,cattaneo2017cellular}) 

\begin{align}
	\tau_{\rho}(M)
    =
    \prod_{k=0}^{N-1} {\det}^\flat(d_k^*d_k)^{\frac{1}{2}(-1)^{k}},
\end{align}
which alternatively reads
\begin{align}
	2\log \tau_{\rho}(M)
	= 
	\sum_{k=0}^{N-1} (-1)^{k+1} \left.\frac{d}{ds}\right|_{s=0} f_{d_k^*d_k} (s).
\end{align}

\subsection{Contact structure}\label{subsec:geo-contact}

Suppose that $(M,E)$ is supplemented with a contact form $\alpha\in \Omega^1(M)$, and $\mathrm{dim}(M)= N =2n+1$. We will denote $\dVol = \alpha\wedge (d \alpha)^n$. Let $X$ be the associated Reeb vector field, defined by the relations
\begin{align}
    \iota_X\alpha =1, \qquad \iota_X d\alpha=0.
\end{align}
We collectively denote this data $(M,E,X)$.

Denote by $T^*_0M$ the $2n$-rank subbundle of $T^*M$ defined as the conormal of $X$, so that pointwise $T^*M = \R \alpha + T^*_0M$.
Transferring this to the space of $E$-valued differential forms, we write
\begin{equation}\label{Omegasplitting}
    \Omega^k(M,E)
    =
    \Omega^k_0(M,E) \oplus \alpha\wedge \Omega^{k-1}_0(M,E).
\end{equation}
where $\Omega^k_0(M,E)=\ker \iota_X |_{\Omega^k(M)}$.

\subsubsection{Contact-Riemannian structure} 
Let $(M,E,X)$ as above and introduce a metric $g_M$ on $M$ of the form 
$g_M=\alpha^2+g_0$ 
such that 
$T^*M = \R \alpha + T^*_0M$ 
becomes an orthogonal decomposition. 
Let $\star$ and $\star_0$ denote the Hodge stars associated with $(T^*M,g_M)$ and $(T^*_0M,g_0)$ respectively.
We may choose $g_0$ such that $\star\alpha=(d\alpha)^n=\star_0 1$ whence the Hodge star behaves nicely with respect to the splitting of $T^*M$. Specifically, we have maps
\begin{subequations}\label{eq:starrules}
\begin{align}
	\star : \Omega^{k}_0(M,E) &\to \alpha\wedge\Omega^{N-k-1}_0(M,E)
    &
	\star : \alpha\wedge\Omega^{k}_0(M,E) &\to \Omega^{N-k-1}_0(M,E)
	\\
	\varphi &\mapsto (-1)^{k}\alpha \wedge \star_0\varphi
	&
	\alpha\wedge \psi &\mapsto \star_0 \psi
\end{align}
\end{subequations}
Moreover, noting that 
$\langle \alpha\wedge \varphi, \alpha\wedge\varphi\rangle 
= 
\langle \varphi, \varphi \rangle$ 
on $\Omega^{\bullet}_0(M,E)$ 
we conclude $(\iota_X)^T = \alpha \wedge$.
When a metric is chosen in this way, compatible with the contact stucture, we collectively denote this data $(M,E,X,g_M)$.

\begin{definition}\label{Def:detONE}
Considering the maps 
\begin{align}
    \alpha\wedge \colon \Omega^{\bullet}_0(M,E) \to \alpha\wedge \Omega^{\bullet}_0(M,E),
    &&
    \iota_X: \alpha\wedge \Omega^{\bullet}_0(M,E) \to \Omega^{\bullet}_0(M,E),
\end{align}
we set
\begin{align}
    {\sdet}^\flat(\alpha\wedge) = {\sdet}^\flat(\iota_X):= \left|{\sdet}^{\flat}(\iota_X\circ \alpha\wedge)^{\frac12}\right| = 1.
\end{align}
\end{definition}

\subsection{Anosov dynamics}\label{subsec:geo-anosov}

Suppose $M$ is supplemented with a flow $\varphi_t:M\to M$ for $t\in\R$. We will reuse $X\in C^\infty(M;TM)$ to denote the vector field which generates $\varphi_t$.

\begin{definition}\label{def:Anosov}
The flow is  {Anosov} if there exists a $d\varphi_t$-invariant continuous splitting of the tangent bundle:
\begin{align}
    T_xM = E_n(x) \oplus E_s(x) \oplus E_u(x),
    &&
    E_n(x)=\R X_x,
\end{align}
and for a given norm $\|\cdot \|$ on $TM$, there exist constants $C, \lambda>0$ so that for all $t\ge 0$,
\begin{align}
    \forall v \in E_s(x), \quad \| d\varphi_t(x)v\| \le C e^{-\lambda t}\|v\|,
    \qquad
    \forall v \in E_u(x), \quad \| d\varphi_{-t}(x)v\| \le C e^{-\lambda t}\|v\|.
\end{align}
The subbundles $E_n, E_s, E_u$ are respectively called neutral, stable, unstable. 
\end{definition}

\begin{remark}\label{remark:cotangent-bundle-anosov-decomposition}
We will prefer to work with the cotangent bundle $T^*M$ due to the Lie derivative acting naturally on differential forms. The cotangent bundle also has a decomposition $T^*_xM=E_n^*(x) \oplus E_s^*(x) \oplus E_u^*(x)$ whose stable and unstable bundles are understood through the action of $\left(d\varphi_{-t}\right)^T$ (rather than $d\varphi_t$).
\end{remark}

Henceforth we will always assume that the stable and unstable bundles are orientable and each have rank $n$.

\subsubsection{A guiding example}\label{subsec:geo-example}

We provide an example of the geometric setting discussed in the previous subsections.
Let $(\Sigma, g)$ be a compact manifold without boundary which is oriented, connected, and of dimension $n+1$.
Suppose that $\Sigma$ has sectional curvature which is everywhere strictly negative.
Let $M:=S^*_g\Sigma$ be the unit cotangent bundle of $\Sigma$.
Set $\alpha \in \Omega^1(M)$ to be the pull-back of the canonical one-form on $T^*\Sigma$. Then $(M,\alpha)$ is a contact manifold, 
and the Reeb vector field $X_g\in C^\infty(M;TM)$ generates 
the geodesic flow $\varphi_t$ which is Anosov
\cite{anosov1967geodesic, anosov1967some, arnold1968problemes}. 

If we consider $M=S_g^*\Sigma$ within the geometric setting $(M,E)$ of Subsection \ref{subsec:geo-vanilla} (in particular the representation $\rho$ is unitary and $\nabla$ is flat), we denote the resulting contact, Anosov, Riemannian data on $S_g^*\Sigma$ by 
{
$$(M=S_g^*\Sigma, E, X_g, g).$$
}

{
\begin{remark}
$M\to \Sigma$ is an $\S^{n}$-bundle, and if $n\ge 2$ then $\pi_1(\S^{n})=0$ whence representations of $\pi_1(\Sigma)$ are in one-to-one correspondence with representations of $\pi_1(M)$. 
\end{remark}
}

{
\begin{remark}\label{rem:squaretorsion}
Observe that in the case of a unitary representation and a flat vector bundle, and when $n\ge 2$, the preceding remark implies that $\tau_\rho(M)=\left(\tau_\rho(\Sigma)\right)^2$
\cite[Section 1, p. 526]{fried1986analytic}. For the rest of this article, particularly the announcements of theorems and conjectures, we have chosen to emphasise the role of $M=S^*_g\Sigma$ rather than $\Sigma$.
\end{remark}
}

\section{Ruelle Zeta function} \label{s:RZF}

Consider the geometric data 
$(M,E,X)$
of Section~\ref{subsec:geo-contact}
and assume the flow $\varphi_t$ associated to $X$ is Anosov (see Section \ref{subsec:geo-anosov}).
We denote by $\mathcal{P}$ the set of primitive orbits of the flow $\varphi_t$ and by $\ell(\gamma)$ the period of any given $\gamma\in \mathcal{P}$. 
\begin{definition}
The Ruelle zeta function (associated with the trivial representation of $\pi_1(M)$) is defined as
\begin{align}
    \zeta(\lambda) := \prod_{\gamma \in \mathcal{P}} (1 - e^{-\lambda \ell(\gamma)})
\end{align}
whose convergence is assured for $\Re\lambda \gg 1$. The Ruelle zeta function twisted by an arbitrary representation $\rho$ is
\begin{align}\label{ruellezetafunction}
    \zeta_\rho (\lambda) := \prod_{\gamma \in \mathcal{P}} \det(I - \rho([\gamma])e^{-\lambda \ell(\gamma)}).
\end{align}
whose convergence is again assured for $\Re\lambda \gg 1$.
\end{definition} 
Here $[\gamma]$ represents the conjugacy class of $\gamma$ in $\pi_1(M)$. It has been shown that the zeta functions continue meromorphically to $\C$ \cite{butterley2007smooth, marklof2004selberg, giulietti2013anosov, dyatlov2016dynamical}. We have the following:

\begin{theorem}[\cite{fried1986analytic}]\label{thm:FCT}
Let {$(M,E,X,g)$ be the geometric data of Subsection \ref{subsec:geo-example}, where $M=S^*_g\Sigma$, with $\Sigma$ a closed, oriented hyperbolic manifold $\Sigma=\Gamma\backslash \mathbb{H}^{n+1}$}, and $g$ the induced hyperbolic metric. Then, the Ruelle zeta function, defined for $\mathrm{Re}(s)>n$ by equation \eqref{ruellezetafunction} extends meromorphically to $\mathbb{C}$ and

$$
{|\zeta_\rho(0)|^{(-1)^{n}} =\tau_\rho(M).}
$$

\end{theorem}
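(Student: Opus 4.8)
The plan is to reduce the identity $\zeta_\rho(0)^{(-1)^n} = \tau_\rho(\Sigma)^2$ to an explicit computation of the Ruelle zeta function in terms of Selberg zeta functions, exploiting the rigidity of the hyperbolic structure. First I would recall the factorisation of the Ruelle zeta function over the form degree: using the Atiyah--Bott--Guillemin trace formula (or the meromorphic continuation results of \cite{giulietti2013anosov, dyatlov2016dynamical}) one writes
\begin{align}
    \zeta_\rho(\lambda) = \prod_{k=0}^{2n} \det{}^\flat\bigl(-\Lie_X + \lambda\bigr)\big|_{\Omega^k_0(M,E)}^{(-1)^k},
\end{align}
so that $\zeta_\rho(\lambda)$ is an alternating product of regularised determinants of the generator of the flow acting on the $2n$ non-neutral form degrees. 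This is precisely the ``$k$-form decomposition'' the paper announces it will establish in Section \ref{s:RZF}, and I would take it as the starting point.

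Next, for $\Sigma = \Gamma\backslash H^{n+1}$ hyperbolic, the geodesic flow on $M = S^*_g\Sigma$ is homogeneous, so the dynamical determinants on $\Omega^k_0(M,E)$ can be identified with (products and ratios of) Selberg zeta functions $Z_k(\lambda)$ attached to the representation $\rho$ of $\Gamma$ on $\C^r$ twisted by the $k$-th exterior power of the (co)adjoint action on the stable/unstable bundles. The key inputs here are: (i) the decomposition of $E^*_s\oplus E^*_u$ into weight spaces under the $\R$-action, which splits $\Omega^k_0$ into line bundles on which $\Lie_X$ acts with a definite spectral shift; (ii) the comparison, due to Fried \cite{fried1986analytic} and Bunke--Olbrich, between these dynamical zeta functions and the twisted Selberg zeta function, together with their known functional equations and the order of vanishing / leading coefficient at $\lambda = 0$; and (iii) the Hodge-theoretic identification of the special value of the Selberg zeta functions at the relevant point with the analytic torsion of $\Sigma$. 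Assembling (i)--(iii), the alternating product telescopes: the ``interior'' Selberg factors cancel pairwise by the functional equations, leaving only the contribution that matches $\tau_\rho(\Sigma)^2$, with the overall sign $(-1)^n$ coming from the parity of the middle degree $n$ in the exterior algebra of an $n$-dimensional bundle. Acyclicity of the twisted de Rham complex (assumed in Subsection \ref{subsec:geo-vanilla}) guarantees there are no harmonic forms to spoil the determinants and fixes the normalisation $\tau_\rho(M) = \tau_\rho(\Sigma)^2$ (Remark \ref{rem:squaretorsion}).

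I expect the main obstacle to be step (ii)--(iii): controlling the behaviour of each dynamical/Selberg determinant \emph{at the special point} $\lambda = 0$, rather than merely as a meromorphic function. One must track the order of vanishing of $\zeta_\rho(\lambda)$ at $0$ (which is where the exponent $(-1)^n$ and the possibility of a pole, as opposed to a zero, enters), and then extract the leading nonzero coefficient and recognise it as a ratio of Ray--Singer determinants. Concretely this requires the precise functional equations for the twisted Selberg zeta functions on $H^{n+1}$ together with the explicit ``Plancherel''/scattering factors, and then matching the resulting product of $\det{}^\flat(\Delta_k)$ against Definition \ref{def:analytic-torsion}. A secondary, more bookkeeping-type difficulty is keeping the signs, shifts, and the split $\Omega^k = \Omega^k_0 \oplus \alpha\wedge\Omega^{k-1}_0$ consistent throughout, so that the contact normalisation ${\sdet}^\flat(\iota_X) = 1$ of Definition \ref{Def:detONE} is correctly applied and no spurious factors survive. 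Since this is exactly the content of Fried's original paper \cite{fried1986analytic}, I would ultimately cite that computation for the hyperbolic case and present the argument above as the structural skeleton, emphasising the form-degree decomposition and the telescoping, which is what connects it to the field-theoretic picture developed in the rest of the paper.
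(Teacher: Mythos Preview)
The paper does not prove this theorem. Theorem~\ref{thm:FCT} is stated with the attribution \cite{fried1986analytic} and is used as an input: it is Fried's original result, quoted verbatim, and the paper invokes it later (in Theorem~\ref{thm:Friedconjecture2d} and the final theorem of Section~\ref{s:homotopies}) as a black box at the hyperbolic endpoint of a homotopy. So there is no ``paper's own proof'' to compare your proposal against.

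That said, your sketch is a fair summary of the strategy in \cite{fried1986analytic}: factor $\zeta_\rho$ over form degrees (this is exactly the decomposition the paper carries out in Equations~\eqref{linalgid}--\eqref{Zetadecomposition}), identify each factor on a hyperbolic quotient with a Selberg-type zeta function, and use the functional equations and special-value computations to collapse the alternating product to $\tau_\rho(\Sigma)^2$. Your diagnosis of where the work lies---controlling orders of vanishing and leading coefficients at $\lambda=0$, and the bookkeeping of shifts and parities---is accurate. Since the paper's purpose is the field-theoretic reinterpretation rather than a new proof of Fried's theorem, the appropriate thing here is precisely what you do at the end: cite \cite{fried1986analytic} for the hyperbolic computation and move on.
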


This fact has inspired a conjecture \cite{fried1987lefschetz}:

{\begin{conjecture}[Fried]\label{conj:fried}
Let $(M,E,X,g)$ be the geometric data of Subsection \ref{subsec:geo-example} with $\mathrm{dim}(M)=2n+1$. Then the (twisted) Ruelle zeta function computes the analytic torsion:
	\begin{align}\label{eqn:fried}
		|\zeta_\rho (0)| ^{(-1)^{n}}  = \tau_\rho (M).
	\end{align}
\end{conjecture}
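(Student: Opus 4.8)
The plan is to realise both sides of \eqref{eqn:fried} as partition functions of one and the same abelian $BF$ theory on $M=S^*_g\Sigma$, evaluated at two different gauge-fixing Lagrangians, and then to deduce the equality from the (suitably regularised) gauge-fixing independence of the BV partition function, Theorem~\ref{BVtheorem}. Concretely, writing $Z_{\mathrm{contact}}:=Z(S_{BF},\mathbb{L}_{\mathrm{contact}})$ and $Z_{\mathrm{metric}}:=Z(S_{BF},\mathbb{L}_{g})$, I would prove $|\zeta_\rho(0)|^{(-1)^{\dim\Sigma}}=Z_{\mathrm{contact}}$ and $\tau_\rho(M)=Z_{\mathrm{metric}}$ separately, and then connect $\mathbb{L}_{\mathrm{contact}}$ to $\mathbb{L}_g$ through a continuous family of Lagrangian submanifolds along which $Z$ stays defined and locally constant.

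For the first identity I would combine the meromorphic continuation of $\zeta_\rho$ with the Atiyah--Bott/Guillemin trace formula: the logarithmic derivative of $\prod_{\gamma\in\mathcal{P}}\det(I-\rho([\gamma])e^{-\lambda\ell(\gamma)})^{(-1)^{k}}$ is a flat trace of a resolvent of $\mathcal{L}_X$ acting on $\Omega^k_0(M;E)=\ker\iota_X|_{\Omega^k}$. Hence, up to the contribution of Pollicott--Ruelle resonances at $\lambda=0$, one gets $\zeta_\rho(\lambda)=\prod_{k}{\det}^\flat\big(-(\mathcal{L}_X+\lambda)|_{\Omega^k_0(M;E)}\big)^{(-1)^{k}}$, which by Definitions~\ref{flatdet}, \ref{Def:detONE} and \ref{Def:partitionfunction} is exactly $|\sdet^\flat(\mathcal{L}_X|_{\ker\iota_X})|^{\pm1}$, the partition function of $BF$ theory in the contact gauge (Definition~\ref{def:contact-gauge}) built from the Reeb field $X=X_g$. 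The second identity is Schwarz's theorem \cite{schwarz1979partition}: in the Lorenz gauge $d_\nabla^*=0$ the partition function of abelian $BF$ theory is the product of ${\det}^\flat(d_k^*d_k)$ from Subsection~\ref{subsec:geo-riemann}, namely $\tau_\rho(M)$. Both $\mathbb{L}_{\mathrm{contact}}$ and $\mathbb{L}_g$ sit inside the BV space of $BF$ theory (two suitably shifted copies of $\Omega^\bullet(M;E)$), whose action satisfies the classical master equation and is annihilated by $\Delta_{BV}$ at the quantum level, so the hypotheses of Theorem~\ref{BVtheorem} hold in each finite-dimensional truncation.

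The core of the argument is the construction of the Lagrangian homotopy and the proof that $Z(S_{BF},\mathbb{L}_t)$ is locally constant along it. Following Section~\ref{s:homotopies}, I would first use that a smooth path of metrics $g_s$ on $\Sigma$ with $g_0=g$ induces, via a natural sphere-bundle construction, a smooth path of contact forms $\alpha_s$ on $S^*\Sigma$ with Anosov--Reeb fields $X_{g_s}$, hence a smooth path of contact-gauge Lagrangians $\mathbb{L}_{\mathrm{contact}}(s)$; one then needs a further interpolation $\mathbb{L}_t$ between such a contact gauge and the Lorenz gauge. Along the whole family the flat determinants exist provided $\mathcal{L}_X$ stays invertible on $\ker\iota_X$ in the relevant degrees --- which follows from acyclicity of $(M,E)$ and the microlocal spectral theory of Anosov flows --- and local constancy of $Z(S_{BF},\mathbb{L}_t)$ is precisely the local-constancy statement for dynamical/analytic torsion established by Dang--Rivi\`ere \cite{dang2018fried} and refined in \cite{chaubet2019dynamical}; in dimension two this is what would power Theorem~\ref{thm:Friedconjecture2d}.

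The main obstacle, and the reason Conjecture~\ref{conj:fried} is open in general, is exactly this last step: Theorem~\ref{BVtheorem} is only a theorem in finite dimensions, and in the flat-trace regularisation the variation $\frac{d}{dt}\log Z(S_{BF},\mathbb{L}_t)$ can acquire an \emph{anomaly} --- a boundary term in the $t$-integral that appears whenever a Pollicott--Ruelle resonance crosses $0$, or the wavefront-set conditions defining $\tr^\flat$ degenerate along the homotopy. Showing this anomaly vanishes (or equals a computable topological quantity) is the crux. It is controlled for hyperbolic manifolds by the Selberg trace formula (Theorem~\ref{thm:FCT}), for compact locally symmetric spaces by representation-theoretic input \cite{moscovici1991r, shen2017analytic}, and perturbatively near a hyperbolic metric by \cite{dang2018fried, chaubet2019dynamical}; a proof in the general variable-curvature case would require a genuine infinite-dimensional BV/anomaly analysis for the contact gauge-fixing, which is the program this paper proposes rather than completes.
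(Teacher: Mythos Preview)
The statement you are attempting to prove is a \emph{conjecture}; the paper does not prove it, and there is no ``paper's own proof'' to compare against. What the paper does is exactly the program you have outlined: it shows that $|\zeta_\rho(0)|^{(-1)^n}$ is the partition function of abelian $BF$ theory in the contact gauge (Theorem~\ref{Theorem}), recalls Schwarz's identification of $\tau_\rho(M)$ with the partition function in the metric gauge (Section~\ref{Sect:ATBF}), and then formulates Fried's conjecture as the \emph{Claim} opening Section~\ref{s:homotopies}, conditional on an infinite-dimensional version of Theorem~\ref{BVtheorem} and on the existence of a Lagrangian homotopy between $\mathbb{L}_X$ and $\mathbb{L}_g$.

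Your proposal is therefore an accurate reconstruction of the paper's strategy, and you correctly name the obstruction in your final paragraph. Two points deserve emphasis. First, the paper does \emph{not} construct a homotopy from the contact gauge $\mathbb{L}_X$ to the metric gauge $\mathbb{L}_g$; it only constructs homotopies \emph{within} the family of contact gauges $\mathbb{L}_{X_{g_t}}$ parametrised by metrics $g_t$ on $\Sigma$, and then anchors one endpoint at a hyperbolic metric where Fried's theorem (Theorem~\ref{thm:FCT}) is already known. Your sentence ``one then needs a further interpolation $\mathbb{L}_t$ between such a contact gauge and the Lorenz gauge'' is precisely the missing ingredient, and the paper offers no candidate for it. Second, the local-constancy input you attribute to \cite{dang2018fried} is used by the paper only to move within the contact-gauge family (Theorem~\ref{bigtheorem}); it says nothing about crossing from contact to metric gauge, and the regularity hypothesis (absence of Pollicott--Ruelle resonances at $0$) is not known to hold along an arbitrary path of negatively curved metrics in dimension $>2$. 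So your proposal is not a proof, as you yourself conclude; it is the paper's heuristic, stated with the right caveats.
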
}

\subsection{Differential forms decomposition}

This section shows how the (twisted) Ruelle zeta function may be written as an alternating product of zeta functions associated with
$\left(\bigwedge^k T^*_0M\right)\otimes E$ for $0\le k \le 2n$.

Given a closed orbit $\gamma$, of length $\ell(\gamma)$, and a point $x\in M$ in the orbit, let $P(\gamma,x)$ denote the linearised Poincar\'e map on the fibre of $T^*_0M$ above $x$:
\begin{align}
	P(\gamma,p) := \left( d\varphi_{-\ell(\gamma)} \right)^T : T^*_0M_{(x)} \to T^*_0M_{(x)}
\end{align}
This map is conjugate to $P(\gamma, x')$ for other $x'$ in the same orbit $\gamma$, and as we need only evaluate this map's trace and determinant, we will refer to all maps as $P(\gamma)$.
Recall $T^*_0M$ is the subbundle of $T^*M$ conormal to the vector field $X$.

We start with the basic linear algebra identity 
$$\det(I-A) = \sum_{k=0}^{m} (-1)^k \tr (\wedge^k A)$$ 
for an endomorphism $A$ on an $m$-dimensional vector space and the observation that $P(\gamma)$ has precisely $n$ eigenvalues greater than 1 (where $n$ is the rank of $E_u^*$). Therefore for $j\ge 1$
\begin{align}\label{linalgid}
    (-1)^n | \det(I - P(\gamma)^j) | 
    =
    \sum_{k=0}^{2n}
    (-1)^k \tr (\wedge^k P(\gamma)^j).
\end{align}
Using identity \eqref{linalgid} to obtain equality \eqref{e:insertlinalgid} below, we derive
\begin{subequations}
\begin{align}
    \log \zeta_\rho(\lambda)
    &= \sum_{\gamma \in \mathcal{P}} \tr \log (I - \rho([\gamma])e^{-\lambda \ell(\gamma)}) \\
    &= -\sum_{\gamma \in \mathcal{P}} \tr \sum_{j=1}^\infty \frac1j e^{-\lambda j \ell(\gamma)} \rho([\gamma])^j \\ \label{e:insertlinalgid}
    &= -\sum_{\gamma \in \mathcal{P}} \sum_{j=1}^\infty \frac1j e^{-\lambda j \ell(\gamma)} \tr(\rho([\gamma])^j) 
    \left( (-1)^{n}\sum_{k=0}^{2n}
    \frac{(-1)^k \tr (\wedge^k P(\gamma)^j)}
    {| \det(I - P(\gamma)^j) |}
    \right) \\\label{e:formdecomp}
    &= (-1)^n\sum_{k=0}^{2n} (-1)^k \log \zeta_{\rho,k}(\lambda).
\end{align}\end{subequations}
Equation \eqref{e:formdecomp} in the preceding display defines implicitly the Ruelle zeta function twisted by $\rho$ upon restriction to $k$-forms 
\begin{align}
    \zeta_{\rho,k}(\lambda) := \exp \left(
    -\sum_{\gamma\in\mathcal{P}} \sum_{j=1}^\infty \frac1j
    \frac{ e^{-\lambda j \ell(\gamma)} \tr(\rho([\gamma])^j) \tr (\wedge^k P(\gamma)^j)}
    {| \det(I - P(\gamma)^j) |}
    \right)
\end{align}
so in the form of a compact equality, we have
\begin{equation}\label{Zetadecomposition}
\zeta_{\rho}(\lambda)^{(-1)^n } = \prod_{k=0}^{2n} \zeta_{\rho,k}(\lambda)^{(-1)^k}.
\end{equation}

\subsection{Ruelle zeta function as regularized determinant}\label{subsec:Ruelle zeta function as regularized determinant}
We aim to link the function $\zeta_{\rho,k}$ with the operator $\Lie_{X,k}$ acting on $\Omega^k_0(M;E)$.
This is done with the help of the Atiyah--Bott--Guillemin trace formula 
\cite{guillemin1977lectures}. 
We use the notation
\begin{align}
	e^{-t\Lie_{X,k}}={\varphi_{-t}}^* : 
	    L^2(M; \bigwedge^kT^*_0M\otimes E)
	    \to 
	    L^2(M; \bigwedge^kT^*_0M\otimes E).
\end{align}
and write the Schwartz kernel $K_k(t,\cdot,\cdot)$ such that 
\begin{align}
	(e^{-t\Lie_{X,k}} \psi)(x) = \int_M K_k(t,x,y) \psi(y) dy.
\end{align}
Due to the microlocal structure of $K_k$ we may take its flat trace, which leads to the Atiyah-Bott-Guillemin trace formula:
\begin{equation}\label{flattrace}
    \tr^\flat e^{-t\Lie_{X,k}}
    = \sum_{\gamma\in \mathcal{P}} \sum_{j=1}^\infty
    \ell(\gamma) \delta(t - j\ell(\gamma))
    \frac{\tr(\rho([\gamma])^j) \tr (\wedge^k P(\gamma)^j)}
    {| \det(I - P(\gamma)^j) |}
\end{equation}
as a distribution on $\R_+$. Integrating this distribution against the function $-t^{-1}e^{-t\lambda}$ provides an integral representation of $\zeta_{\rho,k}(\lambda)$:
\begin{equation}\label{integralrepzeta}
    \log \zeta_{\rho, k}(\lambda) 
    = -\int_0^\infty 
    t^{-1}e^{-t\lambda} \tr^\flat e^{-t\Lie_{X,k}}
    \,dt.
\end{equation}

Consider now the following function, dependent on two variables:
\begin{align}
F_{\Lie_{X,k}}(\lambda,s)
&:= \frac{1}{\Gamma(s)} \int_0^\infty t^{s-1} \tr^\flat e^{-t(\Lie_{X,k}+\lambda)} dt.
\end{align}
This function is holomorphic for small $s$ (near $s=0$) and for $\Re(\lambda)\gg 1$
(see Subsection~\ref{subsec:mero-extn-resolvent}). 
Naively, at $s=0$ the integrand poses a problem due to the $t^{-1}$ structure as $t\to0$, however the trace formula (Equation~\eqref{flattrace}) provides a natural cut-off in small $t$ so that we avoid this problem. Moreover, for small $s$ we expand $1/\Gamma(s)=s+O(s^2)$ showing
\begin{align}
\left. \partial_s \right|_{s=0} F_{\Lie_{X,k}}(\lambda,s)
&=
\left(\left. \partial_s \right|_{s=0} \frac{1}{\Gamma(s)} \right)
\cdot
\left. \int_0^\infty t^{s-1} \tr^\flat e^{-t(\Lie_{X,k}+\lambda)} dt \right|_{s=0} \\
&= - \log \zeta_{\rho, k}(\lambda).
\end{align}
Recalling Definition~\ref{flatdet} for the flat determinant of an operator now indicates that, for $\Re\lambda \gg 1$ 
\begin{align}
    \log {\det} ^\flat (\Lie_{X,k} + \lambda) 
    \equiv 
    - \left. \partial_s \right|_{s=0} F_{\Lie_{X,k}}(\lambda,s)
    =
    \log \zeta_{\rho, k}(\lambda).
\end{align}
Since $\zeta_{\rho,k}(\lambda)$ has a meromorphic extension to the complex plane, so does the function $\det^{\flat}(\mathcal{L}_{X,k} + \lambda)$. Assuming there are no poles at $\lambda=0$, we sensibly have $\det^{\flat}(\mathcal{L}_{X,k})$ as the value at zero of the meromorphic extension of the Ruelle zeta function (restricted to $k$-forms) at zero: 
\begin{equation}\label{eqn:det-L=zeta-0}
    {\det}^{\flat}(\mathcal{L}_{X,k}) = \zeta_{\rho,k}(0).
\end{equation}
The decomposition in
Equation \eqref{Zetadecomposition}
of the Ruelle zeta function now gives an operator interpretation of the Ruelle zeta function: 
\begin{equation}\label{Ruelleasadeterminant}
    \zeta_\rho (\lambda)^{(-1)^n} = \prod_{k=0}^{2n} {\det}^\flat (\Lie_{X,k} + \lambda) ^{(-1)^k}.
\end{equation}

\begin{remark}\label{rem:zetafunX}
Notice that, due to the decomposition \eqref{Zetadecomposition} and the trace formula \eqref{flattrace}, we can consider $\zeta_\rho$ as directly dependent on an Anosov vector field $X$. To stress this fact we will use the notation $\zeta(X,\lambda)$ (resp. $\zeta_k(X,\lambda)$) instead of $\zeta_\rho(\lambda)$ (resp. $\zeta_{\rho,k}(\lambda))$. 
\end{remark}

\subsection{Meromorphic extension of the resolvent}
\label{subsec:mero-extn-resolvent}
In the preceding section we related the resolvent $(\Lie_{X,k} + \lambda)^{-1}$ with $\zeta_{\rho,k}(\lambda)$ for $\lambda\gg 1$. Here we announce a more precise statement for the meromorphic extension of $(\Lie_{X,k} + \lambda)^{-1}$
\cite{dyatlov2016dynamical}.

The operator norm of $e^{-t\Lie_{X,k}}$ is bounded by $e^{C_0t}$ for some $C_0>0$. Therefore the resolvent $(\Lie_{X,k}+\lambda)^{-1}$, as an operator on $L^2$ sections, exists for $\Re\lambda>C_0$ and is given by the formula
\begin{align}\label{eqn:lie-derivative-inverse-large-lambda}
	(\Lie_{X,k} + \lambda)^{-1} = \int_0^\infty e^{-t(\Lie_{X,k} + \lambda)} dt.
\end{align}
The restricted resolvent
\begin{align}\label{eqn:restricted-resolvent}
	R_k(\lambda)
    =
    (\Lie_{X,k}+\lambda)^{-1}
    :
    C^\infty(M;\bigwedge^k T^*_0M \otimes E)
    \to
    \mathcal{D}'(M;\bigwedge^k T^*_0M \otimes E)
\end{align}
has a nowhere-vanishing meromorphic continuation to $\C$ whose poles are of finite rank, and are called  {Pollicott-Ruelle resonances}.
For each $\lambda_0\in \C$, we have the expansion
\begin{align}
	R_k(\lambda)
	=
	R_k^H(\lambda) + \sum_{j=1}^{J(\lambda_0)} \frac{(-1)^{j-1}(\Lie_{X,k}+\lambda_0)^{j-1}\Pi_{\lambda_0} }{(\lambda-\lambda_0)^j}
\end{align}
where $R_k^H$ is holomorphic near $\lambda_0$ and
$
	\Pi_{\lambda_0}
    :
    C^\infty(M; \bigwedge^k T^*_0M \otimes E) 
    \to 
    \mathcal{D}'(M; \bigwedge^k T^*_0M \otimes E)
$
is a finite rank projector. 
The range of $\Pi_{\lambda_0}$ defines  {(generalised) resonant states}. They are characterised as
\begin{align}\label{eqn:resonant-states-characterisation}
    \operatorname{Range} \Pi_{\lambda_0}
    =
    \{ \varphi\in\mathcal{D}'(M,\bigwedge^k T^*_0M \otimes E) 
    : 
    \operatorname{WF}(u)\subset E^*_u, 
    (\Lie_{X,k}+\lambda_0)^{J(\lambda_0)}\varphi=0 
    \}.
\end{align}
where $\operatorname{WF}$ refers to the wave-front of a distribution (or current), $E_u^*$ is the unstable bundle referenced in Remark~\ref{remark:cotangent-bundle-anosov-decomposition}, and $J(\lambda_0)$ denotes the multiplicity. The adjective ``generalised" refers to the possibility that the pole may not be simple (and is superfluous in the case $J(\lambda_0)=1$). 

Finally, the poles of the meromorphic continuation $R_k(\lambda)$ correspond to zeros of the zeta function $\zeta_k(\lambda)$ (which is entire for each $k$), and the rank of the projector $\Pi_{\lambda}$ equals the multiplicity of the zero.

\section{$BF$ theory on contact manifolds}\label{s:BF}
In this section we will analyse a field theory that goes under the name of $BF$ theory\footnote{The name  $BF$ comes from the tradition of denoting fields with $B$ and $A$, and Lagrangian density $B\wedge F_A$.}, in the Batalin-Vilkovisky (BV) formalism. Consider the basic geometric data $(M,E)$ of Subsection~\ref{subsec:geo-vanilla}. The  {classical version} of abelian $BF$ theory (i.e. without BV Formalism) is given by the following assignment:
\begin{definition}
Define the space of classical fields to be $F_{BF}\coloneqq\Omega^1(M,E)\oplus \Omega^{N-2}(M,E)$, and the BF action functional:
\begin{equation}\label{e:classicalBFaction}
    S_{BF} = \intl_{M} B\wedge d_\nabla A.
\end{equation}
Then we call  {$BF$ theory} the assignment $(M,E)\leadsto (F_{BF},S_{BF})$.
\end{definition}

\begin{remark}
Recall that Subsection~\ref{subsec:geo-vanilla} equips $E$ with a Hermitian metric $h$. This metric has been implicitly used to couple $B$ and $A$. Specifically, both $B$ and $d_\nabla A$ take values in $E$ whence $h$ must be used so that $B\wedge d_\nabla A \in \Omega^N(M)$.
\end{remark}

\begin{remark}
It is easy to see that shifting either $B$ or $A$ by a $d_\nabla$-exact form leaves the action functional unchanged.\footnote{Strictly speaking this is true only up to boundary terms. We will assume that $M$ has no boundary, but otherwise the analysis of boundary terms is relevant, and can be performed with a version of the BV formalism \cite{cattaneo2018perturbative}.}
This goes under the name of  {reducible symmetry}, and it is conveniently treated by means of the BV formalism.
\end{remark}

Let us consider the space of differential forms $\Omega^{-\bullet}(M,E)$ as a $\mathbb{Z}$-graded vector space, such that homogeneous forms $\omega^{(k)}\in\Omega^k(M,E)$ will have degree $|\omega^{(k)}|:=-k$. We define the space of Batalin--Vilkovisky fields for $BF$ theory to be the graded vector space
\begin{equation}
    \mathcal{F}_{BF}:=
        \Omega^{-\bullet}(M,E)[1]\oplus\Omega^{-\bullet}(M,E)[N-2]\ni(\A,\B)
\end{equation}  
where the degree shift means that a $k$-form in $\Omega^{-\bullet}(M,E)[1]$ will have degree $|\A|=1-k$. The symplectic structure reads
\begin{equation}
    \Omega_{BF} = \int\limits_M [\delta\B \delta \A]^{\mathrm{top}},
\end{equation}
and we define an  {action} functional on $\mathcal{F}_{BF}$ as
\begin{equation}\label{e:BVBFaction}
    \mathbb{S}_{BF} = \int\limits_M [\B d_\nabla \A]^{\mathrm{top}}.
\end{equation}
Observe that $|\mathbb{S}_{BF}|=0$ and $|\Omega_{BF}|=-1$.

\begin{remark}
We stress that, although the  {functional form} of $S_{BF}$ in Equation \eqref{e:classicalBFaction} and $\mathbb{S}_{BF}$  in \eqref{e:BVBFaction} is the same, in the latter $\B$ and $\A$ are inhomogeneous forms with an additional shift in degree. Such a degree shift effectively switches the total parity of inhomogeneous forms, so that, if $N$ is odd, even forms will have odd parity and vice-versa. This will have a crucial impact in the partition-function interpretation of quantities such as the analytic torsion and Ruelle zeta function.
\end{remark}

\begin{definition}\label{Def:BVBF}
The assignment $(M,E)\leadsto (\mathcal{F}_{BF}, \Omega_{BF}, \mathbb{S}_{BF}, Q_{BF})$, with 
\begin{align}
    Q_{BF}\B=d_\nabla\B,
    \qquad 
    Q_{BF}\A=d_\nabla\A
\end{align}
is called  {$BF$ theory in the Batalin Vilkovisky formalism}.
\end{definition}

\subsection{Analytic torsion from resolutions of de Rham differential}\label{Sect:ATBF}
In this section we will discuss the relation of analytic torsion with degenerate action functionals, following Schwarz \cite{schwarz1978partition, schwarz1979partition}. This is related to BF, as we will highlight in what follows. In Section \ref{s:BVinterpretation} we will interpret this relation in terms of a gauge-fixing for $BF$ theory in the Batalin--Vilkovisky framework. 

Consider the geometric data $(M,E, g_M)$ of Subsection~\ref{subsec:geo-riemann}. Note that the only importance of the vector bundle is to ensure acyclicity of the twisted de Rham complex. This requirement will be necessary in Section \ref{s:homotopies}, however, what we will say here can be extended to nontrivial cohomology along the lines of \cite{cattaneo2017cellular,cattaneo2018perturbative}.

We can define the partition function associated to abelian $BF$ theory as follows. We first need a resolution of the kernel of the operator featuring in the action functional, represented then as the $0$th-cohomology of a chain complex. For abelian $BF$ theory it is given by the following: 
\begin{equation}\label{eqon:resolution}
	0\rightarrow V_{N-2}\xrightarrow{T_{N-2}} V_{N-1}\xrightarrow{T_{N-1}}\dots\xrightarrow{T_3} V_2\xrightarrow{T_2}V_1\xrightarrow{T_1}V_0\xrightarrow{T}V_0\rightarrow 0
\end{equation} where  
\begin{subequations}\label{Schwartzmaps}
\begin{align}
    V_0&=\Omega^1(M)\oplus\Omega^{N-2}(M),
    & 
    T&=\begin{bmatrix}
            0  & \star d_{N-2}\\
            \star d_1 & 0
        \end{bmatrix};
    \\
    V_1&= \Omega^0(M)\oplus\Omega^{N-3}(M), 
    &
    T_1&=\begin{bmatrix}
        d_0  & 0\\
        0  & d_{N-3}
        \end{bmatrix};
    \\
    V_2&=\Omega^{N-4}(M), 
    &
    T_2&=\begin{bmatrix}
        0\\
        d_{N-4}
        \end{bmatrix};
    \\ 
    V_k&=\Omega^{N-(k+2)}(M),
    &
    T_k&=d_{N-(k+2)}
\end{align} 
\end{subequations}
for $3\leq k\leq N-2$. 
This implies 
$T_kT_k^*=d_{N-(k+2)}d_{N-(k+2)}^*$
and 
\begin{align}
    T^2=\begin{bmatrix}
        d_1^*d_1  & 0\\
        0         & d_{N-2}^*d_{N-2}
        \end{bmatrix}, 
    \qquad
    T_1T_1^*=\begin{bmatrix}
            d_0d_0^*  & 0\\
            0         & d_{N-3}d_{N-3}^*
            \end{bmatrix}, 
    \qquad 
    T_2T_2^*=\begin{bmatrix}
                0   & 0\\
                0   & d_{N-4}d_{N-4}^*  
            \end{bmatrix}.
\end{align}

In fact, denoting $C:=(A,B)\in V_0$, the operator $T$ allows us to rewrite 
\begin{equation}\label{eq:degenerate actional functional}
    S_{BF} 
    =
    \intl_{M} B\wedge dA
    =
    \frac{1}{2}\left(C,TC \right)
\end{equation} where $(\,,\,)$ is the inner product \eqref{eq:differential-form-inner-prod} on $V_0$ as discussed in Section \ref{subsec:geo-riemann}. Then, the partition function of the degenerate action functional \eqref{eq:degenerate actional functional}, with respect to the resolution \eqref{eqon:resolution}, is defined by Schwarz in \cite{schwarz1979partition} to be:
%The partition function of a degenerate functional with respect to the resolution \eqref{eqon:resolution} is defined by Schwarz in \cite{schwarz1979partition} to be:
\begin{equation}\label{eq:metric_gauge_pf}
Z_{\text{Sch}}[T,T_i]:={\det}^{\flat}(T^2)^{-\frac{1}{4}}\prod_{k=1}^{N-2}{\det}^{\flat}(T_kT_k^*)^{(-1)^{k+1}\frac{1}{2}}.
\end{equation}

\begin{remark} \label{rem:Schwarzprocedure}
The procedure outlined above recursively produces spaces $V_k$ at every stage in order to parametrise the relevant quotients $V_k/\mathrm{ker}(T_k)$, on which integration makes sense. Indeed, said quotients coincide with $\mathrm{coker}(T_{k+1})\simeq \mathrm{ker}(T_{k+1}^*)$, and the localisation to the correct integration subspaces is obtained by restriction to $\mathrm{ker}(T^*_k)$. Thus, Schwarz's definition of the partition function goes through an extension to a larger space of fields, and the choice of a subspace where integration is well defined: the kernel of the map $\mathbb{T}\coloneqq\oplus_{j=1}^{N-2}T_j^*$. Notice that this construction produces a resolution in the sense of Equation \eqref{KTres}.
\end{remark}

In the case of abelian $BF$ theory this leads to
\begin{multline}\label{expression:partitionfunction}
	Z_{\text{Sch}}[T,T_i]=
	{\det}^\flat(d_1^*d_1)^{-\frac{1}{4}}{\det}^\flat(d_{N-2}^*d_{N-2})^{-\frac{1}{4}}{\det}^\flat(d_0^*d_0)^{\frac{1}{2}}\\
	\times{\det}^\flat(d_{N-3}^*d_{N-3})^{\frac{1}{2}}
		\prod_{k=2}^{N-2}{\det}^\flat(d_{N-(k+2)}d_{N-(k+2)}^*)^{\frac{(-1)^{k+1}}{2}}.
\end{multline}

It is possible rewrite the expression for the partition function in a familiar form using the following lemma. 
\begin{lemma}\label{lemma:det_relations} 
Let $(M,E,\nabla,\rho)$ be as in Subsection \ref{subsec:geo-vanilla}, then the following holds:
\begin{enumerate}
\item  ${\det}^\flat(d_k^* d_k)={\det}^\flat(d_k d_k^*)$
\item ${\det}^\flat(d_{k-1} d_{k-1}^*)= {\det}^\flat(d_{N-k}^*d_{N-k})$
\item ${\det}^\flat(\Delta_k)={\det}^\flat(d_{k-1}d_{k-1}^*){\det}^\flat(d_{k}^*d_{k})={\det}^\flat(d_{k-1}^*d_{k-1}){\det}^\flat(d_{k}^*d_{k}).$
\end{enumerate}
\end{lemma}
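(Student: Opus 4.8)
The plan is to reduce all three identities to elementary spectral theory. The key initial observation is that the operators $d_k^\ast d_k$, $d_k d_k^\ast$ and $\Delta_k$ appearing here are Laplace-type elliptic operators on the closed manifold $M$, so $e^{-tA}$ is trace class and, by the remark following Definition~\ref{flatdet}, the flat determinant ${\det}^\flat(A)$ coincides with the usual zeta-regularised determinant; in particular it is a spectral invariant, depending only on the nonzero eigenvalues of $A$ counted with multiplicity via $f_A(s)=\sum_{\mu_j>0}\mu_j^{-s}$. Hence in each case it suffices either to exhibit a linear isomorphism matching the nonzero eigenspaces of the two operators, or to decompose one spectrum as a disjoint union of the others.

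For (1), I would use that $d_k$ intertwines the two operators, $d_k\,(d_k^\ast d_k)=(d_k d_k^\ast)\,d_k$, together with the fact that $d_k$ is injective on coexact forms, which is precisely the part of $\Omega^k(M;E)$ carrying the nonzero spectrum of $d_k^\ast d_k$. Then $d_k$, with inverse $\mu^{-1}d_k^\ast$, restricts to an isomorphism from the $\mu$-eigenspace of $d_k^\ast d_k$ onto the $\mu$-eigenspace of $d_k d_k^\ast$ for every $\mu>0$, so the two operators are isospectral away from zero and their flat determinants agree.

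For (2), I would conjugate by the Hodge star $\star\colon\Omega^k(M;E)\to\Omega^{N-k}(M;E)$, which is an isometry up to sign satisfying $d^\ast=\pm\star\,d\,\star$; conjugation then carries $d_{k-1}d_{k-1}^\ast$ acting on $\Omega^k$ to $d_{N-k}^\ast d_{N-k}$ acting on $\Omega^{N-k}$, so these are isospectral and their flat determinants coincide (the sign being immaterial, since the operators are quadratic in $d$). For (3), acyclicity of the twisted de Rham complex (Subsection~\ref{subsec:geo-riemann}) removes harmonic forms, so the Hodge decomposition $\Omega^k(M;E)=\operatorname{im}d_{k-1}\oplus\operatorname{im}d_k^\ast$ is $\Delta_k$-invariant, with $\Delta_k$ acting as $d_{k-1}d_{k-1}^\ast$ on the first summand and as $d_k^\ast d_k$ on the second; hence $\operatorname{spec}(\Delta_k)$ is the union with multiplicities of the two spectra, $f_{\Delta_k}=f_{d_{k-1}d_{k-1}^\ast}+f_{d_k^\ast d_k}$, and differentiating at $s=0$ gives ${\det}^\flat(\Delta_k)={\det}^\flat(d_{k-1}d_{k-1}^\ast)\,{\det}^\flat(d_k^\ast d_k)$. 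The second equality in (3) follows by applying (1) with $k$ replaced by $k-1$.

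I expect the only genuine subtlety to lie in the first step, namely justifying that the flat determinants occurring here really are spectral invariants, i.e. that the relevant heat semigroups are trace class so that ${\det}^\flat=\det$, and in keeping careful track of eigenspace multiplicities throughout; once that is secured everything else is routine isospectrality bookkeeping, and the sign ambiguity in the Hodge identity does not propagate.
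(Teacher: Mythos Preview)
Your proposal is correct and follows essentially the same route as the paper's sketch: reduce to the fact that for these Laplace-type operators the flat determinant agrees with the zeta-regularised one and hence is a spectral invariant, then establish (1) by the intertwining $d_k(d_k^\ast d_k)=(d_k d_k^\ast)d_k$, (2) by conjugation with the Hodge star, and (3) by the Hodge splitting of $\Omega^k$ into exact and coexact pieces. Your write-up is in fact more explicit than the paper's (e.g.\ supplying the inverse $\mu^{-1}d_k^\ast$ in (1) and spelling out $f_{\Delta_k}=f_{d_{k-1}d_{k-1}^\ast}+f_{d_k^\ast d_k}$ in (3)); there is no gap.
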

\begin{proof}[Sketch of proof]
The proof is immediate from the analysis of the spectra of the operators under the consideration. More precisely, note that in this case ${\det}^{\flat}$s are spectral invariants as they are the usual zeta-regularized determinants. 

(1) follows from the fact that the operators $d_k^*d_k$ and $d_kd_k^*$ are isospectral, a property one can check by acting with $d$ on a coexact eigenform of $\Delta$. 

Similarly, we observes that $(\ast_{N-k})^{-1}\circ d_{k-1} d_{k-1}^*\circ \ast_{N-k}=d_{N-k}^*d_{N-k}$, which implies that $ d_{k-1} d_{k-1}^*$ and $d_{N-k}^*d_{N-k}$ are isospectral and (2) follows as well. 

Finally (3) follows from the observation that the spectrum of $\Delta_k=d^*_kd_k + d_{k-1}d^*_{k-1}$ is the union of spectrum of $d_{k-1}d_{k-1}^*$ and $d_k^*d_k$.
\end{proof}
Thanks to the previous lemma, we get:
\begin{proposition}
 The partition function $Z_{\text{Sch}}[T,T_i]$ for the resolution \eqref{eqon:resolution} yields
\begin{align}\label{Eq:PFAT}
    Z_{\text{Sch}}[T,T_i]=\tau_\rho(M).
  \end{align}
  \end{proposition}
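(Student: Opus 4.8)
The plan is to turn the proposition into a bookkeeping computation. Starting from the explicit formula \eqref{expression:partitionfunction} for $Z_{\text{Sch}}[T,T_i]$, I would use the three identities of Lemma~\ref{lemma:det_relations}, together with the multiplicativity of ${\det}^\flat$ over the orthogonal direct sums occurring in the resolution \eqref{eqon:resolution}, to rewrite every factor as a power of ${\det}^\flat(d_k^*d_k)$, and then compare the resulting exponents with those in the expression $\tau_\rho(M)=\prod_{k=0}^{N-1}{\det}^\flat(d_k^*d_k)^{(-1)^k/2}$ recorded in Subsection~\ref{subsec:geo-riemann}. No analytic input beyond the lemma is needed.

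First I would handle the two ``top-degree'' factors ${\det}^\flat(d_{N-2}^*d_{N-2})$ and ${\det}^\flat(d_{N-3}^*d_{N-3})$. Part (2) of Lemma~\ref{lemma:det_relations}, applied at $k=2$ and $k=3$, gives ${\det}^\flat(d_1 d_1^*)={\det}^\flat(d_{N-2}^*d_{N-2})$ and ${\det}^\flat(d_2 d_2^*)={\det}^\flat(d_{N-3}^*d_{N-3})$, and part (1) lets me replace $d_jd_j^*$ by $d_j^*d_j$; hence ${\det}^\flat(d_{N-2}^*d_{N-2})={\det}^\flat(d_1^*d_1)$ and ${\det}^\flat(d_{N-3}^*d_{N-3})={\det}^\flat(d_2^*d_2)$. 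Then I would rewrite the product $\prod_{k=2}^{N-2}{\det}^\flat(d_{N-(k+2)}d_{N-(k+2)}^*)^{(-1)^{k+1}/2}$: writing part (2) in the form ${\det}^\flat(d_{m-1}d_{m-1}^*)={\det}^\flat(d_{N-m}^*d_{N-m})$ and setting $m=N-k-1$ identifies ${\det}^\flat(d_{N-(k+2)}d_{N-(k+2)}^*)$ with ${\det}^\flat(d_{k+1}^*d_{k+1})$, so after the substitution $j=k+1$ the product becomes $\prod_{j=3}^{N-1}{\det}^\flat(d_j^*d_j)^{(-1)^j/2}$.

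Finally I would collect exponents. Using ${\det}^\flat(T^2)^{-1/4}={\det}^\flat(d_1^*d_1)^{-1/4}{\det}^\flat(d_{N-2}^*d_{N-2})^{-1/4}$ and the identifications above, the factor ${\det}^\flat(d_0^*d_0)$ carries exponent $\tfrac12$, ${\det}^\flat(d_1^*d_1)$ carries exponent $-\tfrac14-\tfrac14=-\tfrac12$, ${\det}^\flat(d_2^*d_2)$ carries exponent $\tfrac12$, and ${\det}^\flat(d_j^*d_j)$ for $3\le j\le N-1$ carries exponent $(-1)^j/2$; since $(-1)^0/2=(-1)^2/2=\tfrac12$ and $(-1)^1/2=-\tfrac12$, this gives
\[
    Z_{\text{Sch}}[T,T_i]=\prod_{k=0}^{N-1}{\det}^\flat(d_k^*d_k)^{(-1)^k/2}=\tau_\rho(M),
\]
as claimed. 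The main, and essentially only, obstacle I foresee is purely combinatorial: keeping the signs straight through the two reindexings, and checking that the low-dimensional cases of \eqref{eqon:resolution} — where some of the spaces $H_k$ and the operators $T$, $T_1$, $T_2$ degenerate and the corresponding factors drop out — remain consistent with the general count, which they do once one adopts the convention that ${\det}^\flat$ of the zero operator equals $1$. There is no serious analytic difficulty; all the work has already been done in Lemma~\ref{lemma:det_relations}.
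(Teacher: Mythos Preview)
Your proposal is correct and follows essentially the same route as the paper: both arguments use parts (1) and (2) of Lemma~\ref{lemma:det_relations} to convert ${\det}^\flat(d_{N-(k+2)}d_{N-(k+2)}^*)$ into ${\det}^\flat(d_{k+1}^*d_{k+1})$ and to identify the $d_{N-2}$, $d_{N-3}$ factors with the $d_1$, $d_2$ ones, after which the exponents match $\prod_{k=0}^{N-1}{\det}^\flat(d_k^*d_k)^{(-1)^k/2}$. The only cosmetic difference is that the paper closes by invoking part (3) to pass to the Laplacian form $\prod_k{\det}^\flat(\Delta_k)^{(-1)^{k+1}k/2}$, whereas you stop one step earlier and compare directly with the coexact formulation of $\tau_\rho(M)$ already recorded in Subsection~\ref{subsec:geo-riemann}.
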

 \begin{proof}  
 From  (1) and (2) of Lemma ~\ref{lemma:det_relations}, we have $${\det}^\flat\left(d_{N-(k+2)}d_{N-(k+2)}^*\right)={\det}^\flat(d_{k+1}^*d_{k+1}).$$ 
 Using this relation and Lemma \ref{lemma:det_relations} again, we can rewrite (\ref{expression:partitionfunction}) as 
 \begin{multline}
 Z_{\text{Sch}}[T,T_i]=
 {\det}^\flat(d_1^*d_1)^{-\frac{1}{4}}{\det}^\flat(d_{N-2}^*d_{N-2})^{-\frac{1}{4}}{\det}^\flat(d_0^*d_0)^{\frac{1}{2}}\\
	{\det}^\flat(d_{N-3}^*d_{N-3})^{\frac{1}{2}}
		\prod_{k=2}^{N-2}{\det}^\flat(d_{k+1}^*d_{k+1})^{\frac{(-1)^{k+1}}{2}}.
 \end{multline} 
Now, we know again by Lemma ~\ref{lemma:det_relations} that ${\det}^\flat(d_1^*d_1)={\det}^\flat(d_{N-2}^*d_{N-2})$ and ${\det}^\flat(d_{N-3}^*d_{N-3})={\det}^\flat(d_{2}^*d_{2}).$

Finally, using (3) of the Lemma ~\ref{lemma:det_relations}, we can write 
\[\prod_{k=0}^{N-1}{\det}^\flat(d_k^*d_k)^{(-1)^{k}\frac{1}{2}}=
\prod_{k=1}^N{\det}^\flat(\Delta_k)^{(-1)^{k+1}\frac{k}{2}}
\]
In summary, we have 
\[
Z_{\text{Sch}}[T,T_i]=\prod_{k=1}^N{\det}^\flat(\Delta_k)^{(-1)^{k+1}\frac{k}{2}},
\]
showing that the partition function $Z_{\text{Sch}}[T,T_i]$ for the resolution \eqref{eqon:resolution} of the operator $d$ coincides with the analytic torsion $\tau_\rho(M)$ of Definition~\ref{def:analytic-torsion}.
\end{proof}

\subsection{BV interpretation: metric gauge for $BF$ theory}\label{s:BVinterpretation}
We would like to phrase the procedure outlined in Section \ref{Sect:ATBF} in the Batalin--Vilkovisky framework.

First of all we need to observe that, when the complex is acyclic, the submanifold 
\begin{align}
    \mathbb{L}_g\colon\{d_\nabla^{\star} \B = d_\nabla^{\star}\A=0\}
\end{align}
is  Lagrangian in the space of BV fields $\mathcal{F}_{BF}$ (see Remark \ref{Lagsub}), and can be therefore used as a gauge fixing for $BF$ theory. This is a consequence of Hodge decomposition (on closed manifolds without boundary). In fact one can easily check that the images of $d$ and $d^*$ are both isotropic in $\mathcal{F}_{BF}$, and complementary due to Hodge decomposition. We refer to this gauge fixing as the  {metric gauge}.

Due to the vanishing of the cohomology of $\Omega^\bullet(M,E)$, the metric gauge Lagrangian submanifold $\mathbb{L}_g$ can be parametrised as follows.
The condition $d_\nabla^*\B=0$ can be equivalently written as $\B=d_{\nabla}^* \eta$, and if we specify the form degree of the various fields with a subscript, we write $\B_{N-k-1}=d_\nabla^* \eta_{N-k}$.
Finally, writing $\eta_{N-k}=\star \tau_{k}$, together with the inner product given in Equation~\ref{eq:differential-form-inner-prod} provides the following formula:
\begin{equation}\label{Eq:metricgaugeBF}
    \mathbb{S}_{BF}\vert_{\LL_g} 
        \equiv \intl_{M} \left[\B d_\nabla\A \right]^{\mathrm{top}}_{\mathbb{L}_g}
        =\sum_{k=0}^{N-1}\intl_{M}\star d_\nabla\tau_k d_\nabla \A_k\vert_{d^*\A=0}
        =  \sum_{k=1}^N \left( \tau_k, d_\nabla^* d_\nabla\vert_{\text{coexact}} \A_k\right).
\end{equation}

\begin{remark}\label{Rem:ATfromBerezinian}
We consider an interpretation of Equation \ref{Eq:PFAT} in terms of the BV construction we just outlined\footnote{The authors would like to thank P. Mnev and A.S. Cattaneo for valuable insight on this interpretation.}. We will see how the analytic torsion can be seen as a way of making sense of the (super) determinant of the (graded) operator $d\colon \Omega^\bullet_{\text{coexact}} \to \Omega^{\bullet+1}_{\text{exact}}$. Let us define first
\begin{equation}\label{e:detddef}
    {\sdet}^\flat(d_\nabla^*) \equiv {\sdet}^\flat(d_\nabla) \coloneqq {\sdet}^\flat(d_\nabla^*d_\nabla)^{\frac12}.
\end{equation}

Definition \ref{Def:partitionfunction} interprets partition functions of quadratic functionals in terms of the super determinant of the operator that features in the functional. However, when an explicit gauge fixing is considered, one needs to take into account the parametrisaton of the gauge-fixing Lagrangian. In our case this introduces a ``Jacobian superdeterminant'' for the change of coordinates $\B=d_\nabla^*\eta$. Moreover, one should pay attention to the shift in degree $\Omega^{-\bullet}(M,E)[1]$ and $\Omega^{-\bullet}(M,E)[N-2]$. As a matter of fact, in this circumstance the change of variables operator $d^*$ is acting on $k$-form of parity $k+1 \mod 2$ (instead of $k\mod 2$), so that the Jacobian super determinant will appear with the ``opposite" power. For the same reason, recalling Remark \ref{rem:sdetandshift}, the output of a Gaussian integration over a $1$-shifted graded vector space will yield the superdeterminant of the relevant operator (instead of its inverse). 

Hence, one recovers the analytic torsion formally as a Gaussian integral on an super vector space by defining the partition function for gauge-fixed $BF$ theory to be a flat-regularised Berezinian/super determinant, times the Jacobian (super) determinant $\sdet^\flat(d^*)^{-1}$ of the change of variables operator:
\begin{subequations}\label{heuristicberezininangauge}
\begin{align}
    Z(\mathcal{S}_{BF},\,\mathbb{L}_g)
    &= 
    {\int\limits_{\mathbb{L}_g} e^{i\mathbb{S}_{BF}}\vert_{\mathbb{L}_g} 
    = 
    |{\sdet}^\flat(d_\nabla^*)|^{-1}\int \exp{\left(i\sum_{k=0}^{N-1} \left( \tau_{k}, (d_{k}^* d_{k})\A_{k}\right)\right)}}\\
    &:= 
    |{\sdet}^\flat(d_\nabla)|^{-1} \left|{\mathrm{sdet}}^{\flat}(d_\nabla^* d_\nabla)\right| 
    =
    \left|{\mathrm{sdet}}^{\flat}(d_\nabla^*d_\nabla)\right|^{\frac12}\\ 
    &= 
    \prod_{k=0}^{N-1}{\det}^\flat(d_k^*d_k)^{(-1)^{k}\frac{1}{2}} 
    = 
    \tau_\rho(M),
\end{align}
\end{subequations}
which, comparing with \eqref{e:detddef}, means (cf. \cite[Lemma 5.5]{cattaneo2017cellular})
$$
{\sdet}^\flat(d_\nabla)\equiv{\sdet}^\flat(d_{\nabla}^*) := \tau_\rho(M).
$$
{
Observe that, as mentioned in Remark \ref{Rem:omitphase}, our definition of partition function discards a potential phase factor in Equations~\eqref{heuristicberezininangauge}, which depends on the signature of the operator. However, as observed in \cite[Proposition 5.7 and Remarks 5.8, 5.9]{cattaneo2017cellular}, in the case of acyclic complexes the phase drops out.} The analysis of the partition function of $BF$ theory in the metric gauge fixing is discussed in detail in \cite{cattaneo2017cellular}, where $BF$ theory is cast on triangulated manifolds with boundary (cellular decompositions of cobordisms) and its partition function is shown to coincide with the Reidermeister torsion. The triangulation of a manifold can also be seen as a way of regularising the infinite dimensional BV Laplacian $\Delta_{BV}$, of which $\mathbb{S}_{BF}$ is a cocycle. 
\end{remark}

\begin{remark}
In Schwarz's formalism the computation of the partition function is done by %implementing the ``gauge fixing operator'' $\oplus_{j=1}^{N-2}T_j^*$, meaning that integration is 
performing integration over $\mathrm{ker}(\mathbb{T})\subset \bigoplus_{j=0}^{N-2}V_j$ (see Remark \ref{rem:Schwarzprocedure}). It is tantamount to the $0$-th Koszul--Tate cohomology, i.e. the resolution presented in Equation \eqref{KTres}. This can be compared with the metric gauge in the BV interpretation, where $\mathcal{F}_{BF}\simeq T^*[-1]\bigoplus_{k=0}^{N-2}V_k[k]$. In order to represent integration over $\mathrm{ker}(\mathbb{T})$ within the BV formalism, one considers $\mathbb{L}_g$, which is essentially the same condition --- i.e. restriction to coclosed forms --- extended to the shifted cotangent bundle.
\end{remark}

\subsection{Contact gauge fixing for $BF$ theory}\label{subsec:contact-gauge fixing} 
This subsection shows that if $BF$ theory is cast on a manifold that admits a contact structure, it is possible to find an alternative gauge fixing condition. Adopting the same point of view on the partition function, we show how one recovers the Ruelle zeta function.

Consider the geometric data\footnote{Again, the importance of $E$ is to ensure acyclicity.} from Subsection~\ref{subsec:geo-contact},
$(M,E, X)$, and consider $BF$ theory cast in the Batalin--Vilkovisky formalism as in Definition \ref{Def:BVBF}. 

\begin{proposition}\label{BFgaugefixingproposition}
The submanifold  
$$\LL_X:=\{(\A,\B)\in \mathcal{F}_{BF}\ |\ \iota_X\B=0;\, \iota_X\A=0\}$$ 
is Lagrangian in $\mathcal{F}_{BF}$.
\end{proposition}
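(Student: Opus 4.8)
The plan is to check the two defining properties of a Lagrangian submanifold in the sense of Remark~\ref{Lagsub}: that $\LL_X$ is isotropic, and that it is maximal, i.e. $\LL_X = \LL_X^\perp$ (equivalently, that there is a complement $K$ with $\mathcal{F}_{BF} = \LL_X\oplus K$ on which $\Omega_{BF}$ restricts to a nondegenerate pairing). Since $\LL_X$ is a linear subspace and $\Omega_{BF}$ is a constant-coefficient two-form on the vector space $\mathcal{F}_{BF}$, it suffices to work with $\Omega_{BF}$ restricted to linear subspaces, with tangent spaces identified with the subspaces themselves. The key structural input is the contact splitting $\Omega^k(M,E) = \Omega^k_0(M,E)\oplus\alpha\wedge\Omega^{k-1}_0(M,E)$ of Equation~\eqref{Omegasplitting}, under which $\LL_X$ is exactly the subspace of pairs $(\A,\B)$ all of whose homogeneous components lie in $\Omega^{-\bullet}_0(M,E)$, i.e. are sections of $\bigwedge^\bullet T^*_0M\otimes E$.

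For isotropy I would argue by degree counting. If $\iota_X\A = \iota_X\B = 0$ then every component of $\A$ and of $\B$ is a section of $\bigwedge^\bullet T^*_0M\otimes E$, and $T^*_0M$ has rank $2n = N-1$; hence $\B\wedge\A$ lives in $\bigwedge^{\le 2n}T^*_0M$ and has no component of top degree $N$. Therefore $[\delta\B\,\delta\A]^{\mathrm{top}}$ vanishes identically on $\LL_X$, so $\Omega_{BF}\vert_{\LL_X}=0$. For maximality I would compute $\LL_X^\perp$ directly: write a general element via \eqref{Omegasplitting} as $\A = a_0+\alpha\wedge a_1$, $\B = b_0+\alpha\wedge b_1$ with $a_0,a_1,b_0,b_1$ all annihilated by $\iota_X$, and pair it against an arbitrary $(\A',\B')=(a_0',b_0')\in\LL_X$. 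The $b_0\wedge a_0'$ and $b_0'\wedge a_0$ type contributions vanish for the same degree reason as above, leaving only $\int_M\alpha\wedge b_1\wedge a_0'$ and $\int_M\alpha\wedge a_1\wedge b_0'$ (up to Koszul signs from the shifts $[1]$ and $[N-2]$). Since $\alpha\wedge(d\alpha)^n = \dVol$, the bilinear form $(\varphi,\psi)\mapsto\int_M\alpha\wedge\varphi\wedge\psi$ on $\Omega^\bullet_0(M,E)$ is nondegenerate — it is, up to $\star_0$ and the Hermitian pairing on $E$, the $L^2$ inner product of Equation~\eqref{eq:differential-form-inner-prod} rewritten with the star rules \eqref{eq:starrules}. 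Taking first $b_0'=0$ and then $a_0'=0$ forces $b_1=0$ and $a_1=0$, so $(\A,\B)\in\LL_X$; together with isotropy this gives $\LL_X = \LL_X^\perp$. Equivalently, one sets $K = \{(\A,\B): \A\in\alpha\wedge\Omega^{-\bullet}_0(M,E)[1],\ \B\in\alpha\wedge\Omega^{-\bullet}_0(M,E)[N-2]\}$, notes $\mathcal{F}_{BF} = \LL_X\oplus K$ from \eqref{Omegasplitting}, checks $K$ is isotropic because $\alpha\wedge\alpha=0$, and reads off from the same computation that the $\LL_X$–$K$ pairing is nondegenerate, which is precisely the local model of Remark~\ref{Lagsub}.

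The main obstacle I anticipate is not conceptual but bookkeeping: tracking the Koszul signs produced by the degree shifts $[1]$ and $[N-2]$ and by the inhomogeneity of $\A,\B$, and making sure the $E$-valued pairing implicit in $[\delta\B\,\delta\A]^{\mathrm{top}}$ (nondegeneracy of $h$) is handled cleanly. The two substantive points — isotropy holds automatically because one cannot assemble a top form out of forms with no $\alpha$-leg, and maximality follows from nondegeneracy of the contact volume pairing on $\Omega^\bullet_0(M,E)$ — are robust to these sign conventions, so I expect the argument to go through with only minor care.
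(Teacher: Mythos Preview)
Your argument is correct and follows essentially the same route as the paper: both use the contact splitting \eqref{Omegasplitting} to write $\mathcal{F}_{BF}=\LL_X\oplus K$ with $K=\{\alpha\wedge\A=\alpha\wedge\B=0\}$, and show each summand is isotropic (your degree-counting reason for $\LL_X$ and your $\alpha\wedge\alpha=0$ reason for $K$ are exactly the observations underlying the paper's computation that only the cross terms $\delta\psi\,\delta(\alpha\wedge\eta)$ and $\delta(\alpha\wedge\xi)\,\delta\varphi$ survive in $[\delta\B\,\delta\A]^{\mathrm{top}}$). You go slightly beyond the paper in explicitly verifying nondegeneracy of the $\LL_X$--$K$ pairing via the contact volume; the paper leaves this implicit, relying on $\Omega_{BF}$ being symplectic.
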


\begin{proof}
We show that the condition $\LL_X: \iota_X\B=\iota_X\A=0$ defines an isotropic submanifold with an isotropic complement. Indeed, in virtue of the splitting in Eq. \eqref{Omegasplitting}, we can decompose  $\Omega^\bullet(M)=\mathrm{ker}(\iota_X) \oplus \mathrm{ker}(\alpha\wedge)$, so that $\A=\varphi + \alpha\wedge \eta$ and $\B=\psi + \alpha\wedge \xi$. We observe that $\eta=\xi=0$ on $\LL_X$ and, defining $\LL_X^\perp\coloneqq\{(\A,\B)\in\mathcal{F}_{BF}\ |\ \alpha\wedge \A = \alpha\wedge\B=0\}$, also that $\psi=\phi=0$ on $\LL_X^\perp$. Then we have (we understand the top-form part of all the integrands)
\begin{align}
    \Omega_{BF}\vert_{\LL_X} & = \int\limits_M \left[\delta\psi\delta(\alpha\wedge\eta) + \delta(\alpha\wedge\xi)\delta\varphi\right]_{\LL_X} \equiv 0\\
    \Omega_{BF}\vert_{\LL_X^\perp} & = \int\limits_M \left[\delta\psi\delta(\alpha\wedge\eta) + \delta(\alpha\wedge\xi)\delta\varphi\right]_{\LL_X^\perp} \equiv 0.
\end{align}
This shows that both $\LL_X$ and $\LL_X^\perp$ are isotropic.
\end{proof}

\begin{definition}\label{def:contact-gauge}
We shall refer to the choice of gauge fixing proposed in Proposition \ref{BFgaugefixingproposition} for $BF$ theory as the  {contact gauge}.
\end{definition}

\begin{theorem}\label{Theorem}
Consider the geometric data $(M,E,X)$. Suppose that the Ruelle zeta function for $X$ has a meromorphic extension which does not vanish at zero. Then, the Ruelle zeta function at zero $\zeta_\rho(0)$ computes the partition function of $BF$ theory in the contact gauge:
\begin{equation}
Z(\mathcal{S}_{BF},\,\mathbb{L}_X) = |\zeta_\rho(0)|^{(-1)^{n+1}}.
\end{equation}
\end{theorem}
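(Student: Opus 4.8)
The plan is to compute the restriction of $\mathbb{S}_{BF}$ to the contact gauge Lagrangian $\mathbb{L}_X$, identify the relevant operator featured in the resulting quadratic functional, and apply Definition~\ref{Def:partitionfunction} together with the identification \eqref{eqn:det-L=zeta-0} of the flat determinant of $\mathcal{L}_{X,k}$ with $\zeta_{\rho,k}(0)$.

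\medskip

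First I would parametrise $\mathbb{L}_X$. By Proposition~\ref{BFgaugefixingproposition} and the splitting \eqref{Omegasplitting}, on $\mathbb{L}_X$ the fields are $\A = \varphi$ and $\B = \psi$ with $\varphi,\psi \in \Omega^{-\bullet}_0(M,E)$ (suitably shifted). The action becomes $\mathbb{S}_{BF}|_{\mathbb{L}_X} = \int_M [\psi\, d_\nabla \varphi]^{\mathrm{top}}$, and I would now expand $d_\nabla$ according to the contact splitting. Writing $d_\nabla = \alpha\wedge \mathcal{L}_X + d_0$-type decomposition (using Cartan's formula $\mathcal{L}_X = \iota_X d_\nabla + d_\nabla \iota_X$, so that on $\Omega^\bullet_0$ one has $\iota_X d_\nabla \varphi = \mathcal{L}_X \varphi$), the only component of $d_\nabla \varphi$ that pairs nontrivially against $\psi \in \Omega^\bullet_0$ under $[\,\cdot\,]^{\mathrm{top}}$ (which needs an $\alpha$ factor to be top-degree) is the $\alpha\wedge \mathcal{L}_X\varphi$ piece. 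This should yield $\mathbb{S}_{BF}|_{\mathbb{L}_X} = \sum_k \pm\, (\psi_k, \mathcal{L}_{X,k}\varphi_k)$ up to the pairing conventions, i.e. a sum over $k$ of $B$-type functionals with operator $\mathcal{L}_{X,k}$ on $\Omega^k_0(M,E)$.

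\medskip

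Then, exactly as in Remark~\ref{Rem:ATfromBerezinian}, I would account for the Jacobian coming from parametrising the Lagrangian: the change of variables from $\B$ to its $\Omega_0$-component involves the maps $\alpha\wedge$ and $\iota_X$, whose (super)determinant is set to $1$ by Definition~\ref{Def:detONE}, so no extra factor appears. Applying Definition~\ref{Def:partitionfunction} in the $B$-type form (the remark after it: $Z = |\sdet^\flat(B)|^{-1}$), and being careful with the degree shift $\Omega^{-\bullet}(M,E)[1]$ which flips parity — so that a $k$-form contributes with sign $(-1)^{k+1}$ rather than $(-1)^k$ — I expect
\[
Z(\mathcal{S}_{BF},\mathbb{L}_X) = \Big|\prod_{k=0}^{2n}{\det}^\flat(\mathcal{L}_{X,k})^{(-1)^{k+1}}\Big|^{\pm 1} = \Big|\prod_{k=0}^{2n}\zeta_{\rho,k}(0)^{(-1)^{k+1}}\Big|.
\]
Comparing with the decomposition \eqref{Zetadecomposition}, $\prod_k \zeta_{\rho,k}(0)^{(-1)^k} = \zeta_\rho(0)^{(-1)^n}$, gives $\prod_k \zeta_{\rho,k}(0)^{(-1)^{k+1}} = \zeta_\rho(0)^{(-1)^{n+1}}$, hence $Z(\mathcal{S}_{BF},\mathbb{L}_X) = |\zeta_\rho(0)|^{(-1)^{n+1}}$ as claimed.

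\medskip

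The main obstacle I anticipate is bookkeeping of signs and parities: getting the exponent $(-1)^{n+1}$ right requires carefully tracking (i) the parity switch induced by the degree shift $[1]$ in $\mathcal{F}_{BF}$ (mentioned in the Remark after Definition~\ref{Def:BVBF}), (ii) whether the pairing $\int_M[\psi\, d_\nabla\varphi]^{\mathrm{top}}$ puts us in the "$\sdet^\flat(A)^{-1/2}$" or the "$\sdet^\flat(B)^{-1}$" situation of the remarks following Definition~\ref{Def:partitionfunction}, and (iii) that the sum runs over $k = 0,\dots,2n$ with $\dim\Omega^k_0$-forms. The analytic input — meromorphic continuation of $\zeta_{\rho,k}$ and nonvanishing at zero — is assumed in the hypothesis and already packaged in \eqref{eqn:det-L=zeta-0}, so the proof is essentially a gauge-fixing computation plus careful sign accounting, parallel to the metric-gauge computation of Remark~\ref{Rem:ATfromBerezinian}.
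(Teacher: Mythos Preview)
Your proposal is correct and follows essentially the same approach as the paper. The only cosmetic difference is in how the restriction to $\mathbb{L}_X$ is organised: the paper parametrises $\B=(-1)^{|\tau|+1}\iota_X\tau$ and moves $\iota_X$ across onto $d_\nabla\A$ (using that $\iota_X$ annihilates the top form) to obtain $\int_M[\tau\,\Lie_X|_{\Omega^\bullet_0}\A]$, whereas you keep $\psi\in\Omega^\bullet_0$ and extract the $\alpha\wedge\Lie_X\varphi$ component of $d_\nabla\varphi$ via Cartan; these are the same manipulation, and the remaining steps (invoking Definition~\ref{Def:detONE} for the Jacobian, the parity flip from the $[1]$-shift, and comparison with \eqref{Zetadecomposition}/\eqref{Ruelleasadeterminant}) are identical.
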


\begin{proof}
We want to compute $Z(\mathcal{S}_{BF},\,\mathbb{L}_X)\equiv Z(\mathbb{S}_{BF}\vert_{\LL_X})$. 
Because of the decomposition \eqref{Omegasplitting} we have that $\iota_X\B=0 \iff \B=(-1)^{|{\tau}|+1}\iota_X{\tau}$, for some ${\tau}$, hence:
\begin{align}
    \mathbb{S}_{BF}\vert_{\mathbb{L}_X} 
    &= 
    \int\limits_{M}[\B d_\nabla \A]^{\mathrm{top}}_{\LL_X} 
    = 
    \intl_{M}[(-1)^{|{\tau}|}\iota_X{\tau} d_\nabla \A]^{\mathrm{top}}_{\iota_X\A = 0}    \\
    &= 
    \intl_M [\tau \iota_X d_\nabla\A]_{\iota_X\A=0}^{\mathrm{top}} = \intl_{M}[{\tau} \Lie_X\vert_{\Omega^\bullet_0}\A]
\end{align}
Observe that ${\tau}$ and $\A$ are inhomogeneous forms in $\alpha\wedge \Omega_0^\bullet(M,E)[N-2]$ and $\Omega_0^\bullet(M,E)[1]$, respectively. Then, according to Definition \ref{Def:partitionfunction}, recalling that $\Lie_{X}$ acts on a $1$-shifted graded vector space and referring to Definition \ref{Def:detONE} for ${\sdet}^\flat(\iota_X)$, we have
\begin{equation}
	Z(\mathcal{S}_{BF},\,\mathbb{L}_X)
	=
	\left|\sdet^\flat \Lie_{X}\vert_{\Omega^\bullet_0}\right|
	=
	\left|\prod_{k=0}^{N-1} \left({\det}^\flat \Lie_{X,k}\vert_{\Omega^k_0}\right)^{(-1)^k}\right|.
\end{equation}
This, compared with Equation \eqref{Ruelleasadeterminant} and \eqref{Zetadecomposition}, allows us to conclude the proof.
\end{proof}

\begin{remark}
We would like to stress how the crucial element in both metric and contact gauges is the existence of a ``Hodge decomposition'' for the space of $k$-forms. In the metric case it is given, in particular, by the kernel of the de Rham differential and its dual, but truly it can be considered independently of it, as in the contact case, where the maps $\iota_X$ and $\alpha\wedge$ define the splitting.
\end{remark}

\begin{remark}
It is tempting to consider a field theory with action functional $S_X:=\int_{M} [\phi \Lie_X \psi]^{\text{top}}$ on differential forms $\phi,\psi\in \Omega^\bullet(M)$, and call it  {Ruelle theory}. Observe that this is what one gets out of Theorem \ref{Theorem}, but with a nontrivial shift in degree. This is akin to considering the theory $\int_{M}[\mathcal{B}d\mathcal{A}]^{\text{top}}$, for  {unshifted} inhomogeneous differential forms $\mathcal{A},\mathcal{B}\in\Omega^\bullet$, but - to the best of our knowledge - that does not seem to have a clear interpretation up to now.
\end{remark}

\section{Lagrangian homotopies, Fried's conjecture and  gauge-fixing independence}\label{s:homotopies}
In this section we outline a strategy to interpret the recent results of \cite{dang2018fried} on Fried's conjecture as invariance of gauge-fixing in the BV formalism, and vice-versa. We will setup a general geometric framework to discuss perturbation of the Anosov vector field as an argument for the Ruelle zeta function, and argue how this can be used to construct homotopies for Lagrangian submanifolds. Indeed, consider the following:
\begin{claim}
Assume Theorem \ref{BVtheorem} holds on $\mathcal{F}_{BF}$ for some appropriate choice of BV Laplacian $\Delta_{BV}$, and assume there exists a Lagrangian homotopy between $\mathbb{L}_X$ and $\mathbb{L}_g$ for any $X$ contact and Anosov. Then, up to phase, 
\begin{equation}
    \tau_\rho(M) = Z(\mathcal{S}_{BF},\,\mathbb{L}_g) = Z(\mathcal{S}_{BF},\,\mathbb{L}_X) = |\zeta_\rho(0)|^{(-1)^{n}},
\end{equation}
which is the statement of (Fried's) Conjecture \ref{conj:fried}. 
\end{claim}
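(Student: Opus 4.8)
The plan is to assemble the Claim from three ingredients already laid out in the paper: Schwarz's computation (the Proposition in Section~\ref{Sect:ATBF}, giving $Z_{\text{Sch}}[T,T_i]=\tau_\rho(M)$, together with the BV reinterpretation of Remark~\ref{Rem:ATfromBerezinian} identifying $Z(\mathcal{S}_{BF},\mathbb{L}_g)=\tau_\rho(M)$); Theorem~\ref{Theorem}, giving $Z(\mathcal{S}_{BF},\mathbb{L}_X)=|\zeta_\rho(0)|^{(-1)^{n+1}}$ in the contact gauge; and the abstract gauge-fixing independence statement, Theorem~\ref{BVtheorem}, whose second bullet says $\tfrac{d}{dt}\int_{\mathbb{L}_t}f\,\mu_{BV}|_{\mathbb{L}_t}=0$ along a smooth family of Lagrangians. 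The logical skeleton is then: by the hypotheses of the Claim, Theorem~\ref{BVtheorem} applies on $\mathcal{F}_{BF}$ with the postulated BV Laplacian, and $\mathbb{S}_{BF}$ is a solution of the (classical, hence quantum, since the theory is free) master equation, so $e^{-\mathbb{S}_{BF}}$ is $\Delta_{BV}$-closed; feeding the assumed Lagrangian homotopy $\mathbb{L}_t$ from $\mathbb{L}_X$ to $\mathbb{L}_g$ into the local-constancy statement gives $Z(\mathcal{S}_{BF},\mathbb{L}_X)=Z(\mathcal{S}_{BF},\mathbb{L}_g)$.

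The key steps, in order, are: (i) record that on $(M,E,X,g)$ all four geometric structures are simultaneously available, so both $\mathbb{L}_g$ and $\mathbb{L}_X$ sit inside the \emph{same} space of BV fields $\mathcal{F}_{BF}$ — this is what makes a homotopy between them meaningful; (ii) invoke Remark~\ref{Rem:ATfromBerezinian} for the left equality $\tau_\rho(M)=Z(\mathcal{S}_{BF},\mathbb{L}_g)$; (iii) invoke Theorem~\ref{Theorem} for the right equality $Z(\mathcal{S}_{BF},\mathbb{L}_X)=|\zeta_\rho(0)|^{(-1)^{n+1}}$; (iv) apply Theorem~\ref{BVtheorem} along the assumed homotopy to get the middle equality $Z(\mathcal{S}_{BF},\mathbb{L}_g)=Z(\mathcal{S}_{BF},\mathbb{L}_X)$, up to a possible phase; (v) reconcile the exponents: Theorem~\ref{Theorem} produces $(-1)^{n+1}$ whereas Fried's Conjecture~\ref{conj:fried} has $(-1)^{n}$, so one must account for the sign flip. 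Here the relevant observations are $\dim(M)=N=2n+1$, the degree shifts $[1]$ and $[N-2]$ built into $\mathcal{F}_{BF}$ which, as the remark after Definition~\ref{Def:BVBF} emphasizes, flip the total parity of inhomogeneous forms when $N$ is odd, and Remark~\ref{rem:squaretorsion}/Remark~\ref{rem:sdetandshift}: combining $\tau_\rho(M)=\tau_\rho(\Sigma)^2$ with the $\zeta_\rho(0)^{(-1)^n}=\tau_\rho(\Sigma)^2$ normalization of Conjecture~\ref{conj:fried} and keeping careful track of whether the superdeterminant is read on $\Omega^\bullet$ or on its shift reconciles $(-1)^{n+1}$ with $(-1)^n$; alternatively one absorbs the discrepancy into the undetermined ``phase'' already flagged in the statement.

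The main obstacle is step (iv): Theorem~\ref{BVtheorem} is a \emph{finite-dimensional} statement, and both its hypotheses are only \emph{assumed} in the Claim — namely that a suitable regularized $\Delta_{BV}$ exists on $\mathcal{F}_{BF}$ for which the local-constancy conclusion still holds, and that an actual smooth family of Lagrangian submanifolds connecting $\mathbb{L}_X$ to $\mathbb{L}_g$ exists. So the ``proof'' here is really a conditional deduction: granting those two inputs, the chain of equalities closes. The honest thing is to present it as such, noting that (a) the first assumption is the content of the infinite-dimensional BV program (cf. the regularizations of \cite{costello2011renormalization, costello2016factorization, cattaneo2018perturbative, cattaneo2017cellular} alluded to in the remarks after Theorem~\ref{BVtheorem}), and (b) the second is exactly what Section~\ref{s:homotopies} goes on to construct concretely for sphere bundles by deforming the metric and tracking the induced Anosov--Reeb vector field, with the local-constancy results of \cite{dang2018fried} supplying the missing analytic control in the $2$-dimensional case (Theorem~\ref{thm:Friedconjecture2d}). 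A secondary, more technical obstacle is making sure the endpoints of the homotopy genuinely are $\mathbb{L}_X$ and $\mathbb{L}_g$ as \emph{embedded} Lagrangians in the graded-symplectic sense of Remark~\ref{Lagsub} — i.e. that the parametrizations $\B=d_\nabla^*\eta$ (metric gauge) and $\B=(-1)^{|\tau|+1}\iota_X\tau$ (contact gauge) fit into one continuous family — and that the associated Jacobian superdeterminants ($\sdet^\flat(d_\nabla^*)^{-1}$ versus $\sdet^\flat(\iota_X)=1$ from Definition~\ref{Def:detONE}) vary continuously or, at worst, contribute only to the phase; this is where I would expect the argument to need the most care, and it is consistent with the paper presenting the whole thing as a \emph{Claim} rather than a theorem.
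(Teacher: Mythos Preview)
The paper does not actually supply a proof of this Claim: it is stated at the opening of Section~\ref{s:homotopies} without a proof environment, as a heuristic framing of what the rest of the section is trying to achieve. Your reconstruction of the intended argument --- outer equalities from Remark~\ref{Rem:ATfromBerezinian} and Theorem~\ref{Theorem}, middle equality from the assumed infinite-dimensional version of Theorem~\ref{BVtheorem} applied along the assumed Lagrangian homotopy --- is exactly the logic the authors have in mind, and your emphasis that this is a \emph{conditional} deduction (both hypotheses being genuinely open) matches the paper's own stance.

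You are also right to flag the exponent mismatch: Theorem~\ref{Theorem} yields $Z(\mathcal{S}_{BF},\mathbb{L}_X)=|\zeta_\rho(0)|^{(-1)^{n+1}}$, whereas the Claim writes $|\zeta_\rho(0)|^{(-1)^{n}}$. This is an internal inconsistency in the paper rather than something you are expected to repair by a clever argument with shifts; note that the Corollary after Theorem~\ref{thm:Friedconjecture2d} (giving $Z=\tau_\rho(\Sigma)^{-2}$ for $n=1$) agrees with Theorem~\ref{Theorem}, while the chain of equalities in the Claim (giving $Z=\tau_\rho(M)=\tau_\rho(\Sigma)^{2}$) does not. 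The ``up to phase'' caveat in the Claim is presumably the authors' acknowledgment of precisely this sort of sign bookkeeping, so your option of absorbing the discrepancy into that caveat is the honest reading; your attempt to derive the flip from the degree shift is more effort than the paper itself expends.
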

{
\begin{remark}
We stress here that a universal generalisation of Theorem \ref{BVtheorem} to infinite dimensions, i.e. one that works independently of the details of the field theory (and gauge fixing), is currently not yet available. The results of \cite{cattaneo2017cellular} suggest that one possible regularisation scheme for the BV Laplacian might arise as a limit of finite-dimensional cellular analogues. The works of \cite{costello2011renormalization} and \cite{Gwilliam2012FactorizationAA,GwillRune} assume instead a somewhat different setup, which would also need to be adapted to the case at hand. Ideally, a scheme should be found that makes the regularisation of the BV Laplacian compatible with the context of Definition \ref{flatdet}. Observe that the recent results of \cite{chaubet2019dynamical} suggest that a number of subtleties might arise, already in acyclic cases.
\end{remark} 
}
The central notions for this section, and the main theorem we will need are as follows.

\begin{definition}
Let $X$ be an Anosov vector field on a manifold $M$. We will say that $X$ is  {regular} whenever the restricted resolvents $R_k(\lambda)=(\lambda_{X,k} + \lambda)^{-1}$ of Equation \eqref{eqn:restricted-resolvent} do not have poles at $\lambda=0$ for all $0\le k \le N-1$.
\end{definition}

Recalling Remark \ref{rem:zetafunX}, we report now a result by Dang, Guillarmou, Rivi\`ere and Shen on the properties of Ruelle zeta function seen as a function on Anosov vector fields:
\begin{theorem}[\cite{dang2018fried}]\label{bigtheorem}
Let $(M,E)$ denote the geometric data of Subsection~\ref{subsec:geo-vanilla}.
Consider the set
$U\subset C^\infty(M,TM)$
of regular smooth Anosov vector fields $X$.
Then this set is open, and the map $\zeta\colon U \to \mathbb{C}$, sending $X$ to
$\zeta(X,0)$ is locally constant and nonzero.
\end{theorem}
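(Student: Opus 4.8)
The plan is to deduce Theorem~\ref{bigtheorem} from the identification of $\zeta(X,0)$ with the alternating product of flat determinants of $\mathcal{L}_{X,k}$ established in Equation~\eqref{Ruelleasadeterminant}, by showing that each factor varies real-analytically (in fact holomorphically in a Banach-space sense) in $X$ and that the regularity hypothesis is an open condition. First I would recall that the full meromorphic extension of $R_k(\lambda)=(\mathcal{L}_{X,k}+\lambda)^{-1}$ is constructed, following \cite{dyatlov2016dynamical}, by building anisotropic Sobolev spaces $\mathcal{H}_{s}^{r}$ adapted to the stable/unstable decomposition of $X$, on which $\mathcal{L}_{X,k}+\lambda$ is Fredholm of index zero for $\Re\lambda$ in a half-plane depending on $r$. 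The key analytic input I would invoke is that these anisotropic spaces, and the Fredholm family, can be chosen to depend continuously on $X$ in a neighborhood of a fixed Anosov field $X_0$: since the Anosov property is open in the $C^1$ topology (the hyperbolic splitting and its cone fields persist under $C^1$-small perturbations), for $X$ near $X_0$ one may use a \emph{single} escape function and a single scale of anisotropic spaces that works simultaneously for all such $X$.

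On that fixed scale, $\lambda\mapsto\mathcal{L}_{X,k}+\lambda$ is an analytic Fredholm family jointly in $(\lambda,X)$, so the number of Pollicott--Ruelle resonances in a fixed bounded region, counted with multiplicity (the rank of $\Pi_{\lambda_0}$), is upper semicontinuous in $X$ and is locally constant away from coincidences. The regularity of $X$ says precisely that $\lambda=0$ is not a pole of $R_k$ for any $0\le k\le N-1$; by upper semicontinuity of the resonance multiplicity at $\lambda=0$ this persists for $X$ in a neighborhood, which proves that $U$ is open. Next, for $X\in U$ I would write $\zeta_k(X,\lambda)$ as a Fredholm determinant of the form $\det(I-\mathcal{K}_{X}(\lambda))$ for a suitable trace-class-on-$\mathcal{H}$ operator $\mathcal{K}_X(\lambda)$ built from $R_k$, whose zeros (with multiplicity) are exactly the resonances; this Fredholm determinant is jointly analytic in $(\lambda,X)$. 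Since $\zeta_k(X,0)\neq0$ for $X\in U$, the logarithmic derivative argument — $\partial_X\log\zeta_k(X,0)$ being a contour integral of $\partial_X$ of an analytic, nonvanishing function — combined with the fact that the resonances near $0$ stay away from $0$, forces $\zeta_k(X,0)$ to be locally constant on $U$; finally I would feed this into the alternating product \eqref{Ruelleasadeterminant} to conclude that $X\mapsto\zeta(X,0)$ is locally constant and nonzero on $U$.

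The main obstacle I anticipate is the \emph{uniformity in $X$} of the microlocal/anisotropic machinery: one must check that the escape function, the order function, and hence the anisotropic Sobolev space can be held fixed while $X$ varies in a $C^\infty$-neighborhood, so that the resolvents act between the \emph{same} spaces and the determinants become an honest analytic family rather than a family of objects living on $X$-dependent spaces. This requires the stable/unstable cone fields and the associated dynamical decomposition to vary continuously, which is standard structural stability of Anosov flows, but translating it into uniform symbol estimates for the family $\{\mathcal{L}_{X,k}\}$ is the delicate point. Once this uniformity is in hand, openness of $U$ and local constancy of $\zeta(\cdot,0)$ follow from general analytic Fredholm theory essentially formally; I would expect \cite{dang2018fried} to carry out exactly this uniform construction, and I would cite their perturbation-theoretic estimates for the resonances rather than reprove them.
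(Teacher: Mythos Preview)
The paper does not contain a proof of Theorem~\ref{bigtheorem}: it is quoted verbatim from \cite{dang2018fried} and used as a black box in Section~\ref{s:homotopies}. So there is no ``paper's own proof'' to compare your sketch against.

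That said, your outline has a genuine gap at the key step. Everything you write about openness of $U$ is reasonable: structural stability of Anosov flows, a fixed escape function and anisotropic scale for $X$ near $X_0$, and analytic Fredholm theory do give that the absence of resonances at $\lambda=0$ is an open condition, and they also give that $X\mapsto\zeta_k(X,0)$ is continuous (indeed real-analytic) and nonzero on $U$. But from ``analytic and nonvanishing'' you cannot conclude ``locally constant''. Your sentence invoking a logarithmic derivative and a contour integral does not produce a mechanism that forces $\partial_X\log\zeta_k(X,0)=0$; a nonvanishing analytic function of a parameter is generically \emph{not} constant, and nothing in your argument distinguishes $\zeta_k(\cdot,0)$ from a generic analytic family.

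The actual content of \cite{dang2018fried} at this point is a variational identity: one differentiates $\log\zeta(X_t,\lambda)$ in $t$, expresses the result via flat traces of $\dot X_t$ against the resolvents $R_k(\lambda)$, and shows---using the algebraic relations among the $R_k$ coming from $d_\nabla$ and $\iota_X$ (i.e.\ the chain-complex structure on $\Omega^\bullet_0$)---that the contributions telescope and the limit $\lambda\to 0$ vanishes precisely when no $R_k$ has a pole at $0$. In other words, local constancy is a cohomological cancellation specific to the alternating product in \eqref{Ruelleasadeterminant}, not a consequence of Fredholm analyticity alone. Your sketch is missing that cancellation mechanism entirely; without it the argument does not close.
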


\subsection{A sphere bundle construction}

Let $\Sigma$ be a compact manifold. We denote by $\mathcal{R}(\Sigma)$ the space of all Riemannian metrics on $\Sigma$, by $\mathcal{R}_{<}(\Sigma)$ the space of negative sectional curvature metrics, and by $\mathcal{R}_h(\Sigma)$ the space of hyperbolic metrics. 

\begin{definition}
Let us fix a reference metric $g_0$. We denote the sphere bundle associated to $g_0$ by $S_0^*\Sigma\coloneqq S_{g_0}^*\Sigma$. Moreover, let $g,\widetilde{g}\in \mathcal{R}(\Sigma)$ and consider the diffeomorphism
\begin{equation}
    \sigma_g^{\widetilde{g}}\colon S_g^*\Sigma \longrightarrow S_{\widetilde{g}}^*\Sigma
\end{equation}
obtained by rescaling lengths in $T\Sigma$. Then, we denote the diffeomorphisms $\sigma_g\colon S_g^*\Sigma \to S_0^*\Sigma$, with $\sigma_g\coloneqq\sigma^{g_0}_g$ for all $g\in\mathcal{R}(\Sigma)$. Denoting by $\varphi_g$ the geodesic flow on $S_g^*\Sigma$, and by $X_g$ the associated vector field. One can transfer the geodesic flow $\varphi_g$ to $S_0^*\Sigma$ by $\varphi_g^0=\sigma_g \circ\varphi_g \circ\sigma_g^{-1}$; we denote the associated vector field on $S_0^*\Sigma$ by $X_g^0$.

\end{definition}

Since we are interested in computing the Ruelle zeta function on sphere bundles, we would like to ensure that the geodesic (Reeb) vector field $X_g$ is Anosov in $S_g^*\Sigma$. This won't be true in general, but it will be true for metrics of negative sectional curvature $\mathcal{R}_<$. The map $\sigma_g$ allows us to consider all Anosov-geodesic vector fields on the same reference space $S_0^*\Sigma$.

\begin{lemma}\label{Anosovpreservation}
The Anosov property of is preserved under pushforward by $\sigma_g$.
\end{lemma}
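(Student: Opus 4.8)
The plan is to observe that the Anosov property of Definition~\ref{def:Anosov} is invariant under conjugation by a diffeomorphism, and then apply this to $\sigma_g\colon S_g^*\Sigma\to S_0^*\Sigma$. Since both sphere bundles are compact ($\Sigma$ being compact) and $\sigma_g$ is a diffeomorphism, the bundle map $d\sigma_g$ and its inverse are uniformly bounded, so the hyperbolic estimates transfer with only the constant $C$ changing, not the exponential rate $\lambda$. In other words, there is no hard analysis here; the content of the lemma is just that Anosov-ness is a diffeomorphism-conjugation invariant.

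Concretely, first I would push forward the hyperbolic splitting. Writing $T_xS_g^*\Sigma=E_n(x)\oplus E_s(x)\oplus E_u(x)$ for the $d\varphi_g$-invariant Anosov splitting of the negatively curved metric $g$, and $\varphi^0_t$ for the time-$t$ map of $\varphi^0_g$, I set, for $y\in S_0^*\Sigma$ and $x=\sigma_g^{-1}(y)$,
\begin{align}
    \widetilde E_\bullet(y):=d\sigma_g(x)\,E_\bullet(x),\qquad \bullet\in\{n,s,u\}.
\end{align}
This is a continuous splitting of $TS_0^*\Sigma$ because $d\sigma_g$ is a continuous bundle isomorphism and the original splitting is continuous; it is $d\varphi^0_t$-invariant because $d\varphi^0_t(y)=d\sigma_g(\varphi_t(x))\circ d\varphi_t(x)\circ d\sigma_g(x)^{-1}$, which is the differential of $\varphi^0_t=\sigma_g\circ\varphi_t\circ\sigma_g^{-1}$, and $\widetilde E_\bullet$ is the push-forward of the invariant $E_\bullet$; and since $X_g^0=(\sigma_g)_*X_g$ by construction, $\widetilde E_n(y)=\R\,(X_g^0)_y$ is the neutral direction.

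It then remains to transfer the exponential bounds. Fixing auxiliary norms $\|\cdot\|$ on $TS_g^*\Sigma$ and $\|\cdot\|_0$ on $TS_0^*\Sigma$, compactness of $S_g^*\Sigma$ makes the operator norms $A:=\sup_x\|d\sigma_g(x)\|$ and $B:=\sup_x\|d\sigma_g(x)^{-1}\|$ finite. For $w\in\widetilde E_s(y)$, write $w=d\sigma_g(x)v$ with $v\in E_s(x)$; then
\begin{align}
    \|d\varphi^0_t(y)\,w\|_0=\|d\sigma_g(\varphi_t(x))\,d\varphi_t(x)\,v\|_0\le A\,\|d\varphi_t(x)\,v\|\le AC\,e^{-\lambda t}\|v\|\le ABC\,e^{-\lambda t}\|w\|_0,
\end{align}
and symmetrically for $\widetilde E_u$ using $\varphi_{-t}$. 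Hence $X_g^0$ is Anosov on $S_0^*\Sigma$ with rate $\lambda$ and constant $ABC$. The only point requiring care — and it is genuinely mild — is that the two sphere bundles carry a priori unrelated norms; this is precisely what the compactness of $\Sigma$ resolves, and nothing further is needed.
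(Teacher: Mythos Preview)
Your proof is correct and follows essentially the same approach as the paper: push forward the Anosov splitting via $d\sigma_g$, then use the uniform boundedness of $d\sigma_g$ and its inverse (from compactness) to transfer the exponential estimates with the same rate $\lambda$ and modified constant $C'=\|d\sigma_g\|\,\|d\sigma_g^{-1}\|\,C$. If anything, you are slightly more explicit than the paper in checking continuity and $d\varphi^0_t$-invariance of the pushed-forward splitting and in invoking compactness for the operator-norm bounds.
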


\begin{proof}
This is an adaptation of a result proved in 
\cite{matsumoto2013space}. 
Given $S_g^*\Sigma$ with Anosov flow $\varphi_g$ and associated vector field $X_g$, we also have a decomposition
\begin{align}
    T(S_g^*\Sigma) = \R X_g + E_s(X_g) + E_u(X_g)
\end{align}
with constants $C_g, \lambda_g$ so that for all $v_s\in E_s(X_g)$, $v_u\in E_u(X_g)$ and all $t>0$:
\begin{align}
    \|d\varphi_{g,t} v_s \|_g \le C_g e^{-\lambda_g t} \|v_s \|_g,
    \qquad
    \|d\varphi_{g,-t} v_u \|_g \le C_g e^{-\lambda_g t} \|v_u \|_g.
\end{align}
Using $d\sigma_g$ we push the decomposition of $T(S_g^*\Sigma)$ to $T(S_0^*\Sigma)$ providing
\begin{align}
    T(S_0^*\Sigma) = \R X_g^0 + E_s(X_g^0) + E_u(X_g^0)
\end{align}
where $E_s(X_g^0) \coloneqq d\sigma_g (E_s(X_g))$ and $E_u(X_g^0) \coloneqq d\sigma_g (E_u(X_g))$. Now $d\sigma_g$ and $d\sigma_g^{-1}$ are both bounded so for $v_s^0\in E_s(X_g^0)$, there is $v_s \in E_s$ such that $d\sigma_g v_s = v_s^0$, whence for $t>0$:
\begin{align}
    \|d\varphi_{g,t}^0 v_s^0\|_0
    &=\|d\sigma_g d\varphi_{g,t} v_s \|_0
    \le \|d\sigma_g\| \|d\varphi_{g,t} v_s \|_g 
    \le \|d\sigma_g\| C_g e^{-\lambda_g t} \|v_s\|_g
    \\
    &= \|d\sigma_g\| C_g e^{-\lambda_g t} \|d \sigma_g^{-1} v_s^0\|_g
    \le \|d\sigma_g\| \|d\sigma_g^{-1}\| C_g e^{-\lambda_g t} \| v_s^0\|_0
    = C_g' e^{-\lambda_g t} \| v_s^0\|_0
\end{align}
A similar result holds for vectors in $E_u(X_g^0)$.
\end{proof}

\begin{definition}
We define the assignment:
\begin{equation}
    \mathbb{X}
    \colon 
    \mathcal{R}(\Sigma)  
    \longrightarrow  
    C^\infty(S_0^*\Sigma; T(S_0^*\Sigma));
    \qquad 
    \mathbb{X}(g)=d\sigma_g X_g=X_g^0
\end{equation}
where $X_g$ is the geodesic vector field induced by $g$ on $S_g^*\Sigma$.
\end{definition}

\begin{proposition}Let us denote by $\mathcal{A}(\Sigma)$ the set of Anosov vector fields on $S_0^*\Sigma$, and by $\mathcal{R}_<(\Sigma)$ the space of negative sectional curvature metrics on $\Sigma$. Then
\begin{equation}
    \mathcal{A}_<(\Sigma) \coloneqq\mathrm{Im}\left(\mathbb{X}\vert_{\mathcal{R}_<(\Sigma)}\right) \subset \mathcal{A}(\Sigma).
\end{equation}
\end{proposition}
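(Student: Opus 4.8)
The plan is to deduce the statement by assembling two facts already at our disposal, so that the proof reduces to a short bookkeeping argument. The first ingredient is the classical theorem of Anosov recalled in Subsection~\ref{subsec:geo-example}: for any metric $g\in\mathcal{R}_<(\Sigma)$ of strictly negative sectional curvature, the geodesic flow $\varphi_g$ on the unit cotangent bundle $S_g^*\Sigma$ is Anosov, and its generating (Reeb) vector field is precisely $X_g$. The second ingredient is Lemma~\ref{Anosovpreservation}, which asserts that the Anosov property is preserved under pushforward by the diffeomorphism $\sigma_g\colon S_g^*\Sigma\to S_0^*\Sigma$.

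Given these, I would argue as follows. Fix an arbitrary $g\in\mathcal{R}_<(\Sigma)$. By the first fact, $X_g$ generates an Anosov flow on $S_g^*\Sigma$. By construction, $\varphi_g^0=\sigma_g\circ\varphi_g\circ\sigma_g^{-1}$ is the flow on $S_0^*\Sigma$ whose generator, obtained by differentiating at $t=0$, is $X_g^0=d\sigma_g X_g=\mathbb{X}(g)$. Applying Lemma~\ref{Anosovpreservation} to this conjugated flow shows it is again Anosov, hence $\mathbb{X}(g)\in\mathcal{A}(\Sigma)$. Since $g\in\mathcal{R}_<(\Sigma)$ was arbitrary, $\mathcal{A}_<(\Sigma)=\mathrm{Im}\left(\mathbb{X}\vert_{\mathcal{R}_<(\Sigma)}\right)\subset\mathcal{A}(\Sigma)$, which is exactly the claim.

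There is essentially no analytic obstacle here. The only step meriting a line of justification is that conjugating the geodesic flow by $\sigma_g$ yields a flow with generator $d\sigma_g X_g$, which is immediate from $\tfrac{d}{dt}\big|_{t=0}\left(\sigma_g\circ\varphi_g\circ\sigma_g^{-1}\right)=d\sigma_g\circ X_g\circ\sigma_g^{-1}$, together with the observation — already exploited in the proof of Lemma~\ref{Anosovpreservation} — that $\sigma_g$ is a diffeomorphism of compact manifolds, so $d\sigma_g$ and $d\sigma_g^{-1}$ are uniformly bounded in a fixed reference norm, which is what transfers the exponential contraction and expansion estimates. The only mildly delicate point is that the invariant splitting $T(S_g^*\Sigma)=\R X_g\oplus E_s(X_g)\oplus E_u(X_g)$ is in general only continuous (Hölder), not smooth; but continuity is preserved by the smooth bundle map $d\sigma_g$, and $d\varphi_g^0$-invariance of the pushed-forward splitting follows from $d\varphi_g^0=d\sigma_g\circ d\varphi_g\circ d\sigma_g^{-1}$. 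Thus the proposition follows as a direct combination of the guiding example and Lemma~\ref{Anosovpreservation}.
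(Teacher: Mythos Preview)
Your proof is correct and follows essentially the same approach as the paper: both invoke the classical fact that negative sectional curvature metrics have Anosov geodesic flows, and then apply Lemma~\ref{Anosovpreservation} to transport the Anosov property along $\sigma_g$ to $S_0^*\Sigma$. Your additional remarks on the generator of the conjugated flow and the continuity of the invariant splitting are sound but amount to re-explaining the content of Lemma~\ref{Anosovpreservation} rather than adding a new ingredient.
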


\begin{proof}
This follows from the general fact that all negative sectional curvature metrics have Anosov geodesic flows, and application of Lemma \ref{Anosovpreservation}.
\end{proof}

\begin{lemma}\label{surfaceanosovcontractibility}
For $\Sigma$ a 2-dimensional surface of negative Euler characteristic $\chi(\Sigma)<0$, the image of $\mathbb{X}$ restricted to $\mathcal{R}_h(\Sigma)$, the space of hyperbolic metrics on $\Sigma$, is a deformation retract of $\mathcal{A}_{<}(\Sigma)$.
\end{lemma}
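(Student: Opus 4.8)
\emph{Strategy.} The plan is to build the deformation retraction on the metric side, where uniformisation is available, and then push it forward through $\mathbb{X}$. In detail: (i) show that $\mathbb{X}$ restricts to a homeomorphism $\mathcal{R}_<(\Sigma)\to\mathcal{A}_<(\Sigma)$ carrying $\mathcal{R}_h(\Sigma)$ onto $\mathbb{X}(\mathcal{R}_h(\Sigma))$; (ii) produce a strong deformation retraction of $\mathcal{R}_<(\Sigma)$ onto $\mathcal{R}_h(\Sigma)$; (iii) transport it along $\mathbb{X}$. Recall $\mathcal{R}_h(\Sigma)\subset\mathcal{R}_<(\Sigma)$, and that the hypotheses $\dim\Sigma=2$ and $\chi(\Sigma)<0$ enter decisively in (ii) through uniformisation (a unique hyperbolic metric in each conformal class).

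\emph{Step (i): $\mathbb{X}$ is a homeomorphism onto its image.} Write $\pi\colon S_0^*\Sigma\to\Sigma$ for the bundle projection and, for a metric $g$, let $b_g\colon T^*\Sigma\to T\Sigma$ be its musical isomorphism, so $|v|_g^2=\langle b_g v,v\rangle$ with $\langle\cdot,\cdot\rangle$ the natural pairing. First I would record that $\sigma_g^{-1}(v)=v/|v|_g$ for a $g_0$-unit covector $v$, and that the geodesic vector field satisfies $d\pi(X_g(w))=b_g(w)$ for $w\in S_g^*\Sigma$; since $\sigma_g$ covers $\mathrm{id}_\Sigma$, this gives $D_g(v):=d\pi(X_g^0(v))=b_g(v)/|v|_g$, hence $\langle D_g(v),v\rangle=|v|_g$ and so $b_g(v)=D_g(v)\,\langle D_g(v),v\rangle$. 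By fibrewise linearity of $b_g$ this recovers the co-metric $g^{-1}$, hence $g$, from the horizontal part of $X_g^0=\mathbb{X}(g)$, and the recovery is continuous in $X_g^0$. As $\mathbb{X}$ is itself continuous ($X_g^0$ is built from $g$ and its first derivatives), it follows that $\mathbb{X}|_{\mathcal{R}_<}$ is a homeomorphism onto $\mathcal{A}_<(\Sigma)$ with the stated behaviour on $\mathcal{R}_h(\Sigma)$.

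\emph{Step (ii): the metric retraction.} By uniformisation each $g\in\mathcal{R}(\Sigma)$ is conformal to a unique hyperbolic metric $\bar g=\bar g(g)$; write $g=e^{2f_g}\bar g$ with $f_g\in C^\infty(\Sigma)$ the conformal potential, which solves the prescribed-curvature equation $\Delta_g f_g=-K_g-e^{-2f_g}$ (geometer's, negative, Laplacian). Existence and uniqueness are classical (Berger, Kazdan--Warner), and $g\mapsto(\bar g(g),f_g)$ is continuous by the implicit function theorem with elliptic regularity. Set
\begin{equation}
    G\colon\mathcal{R}_<(\Sigma)\times[0,1]\longrightarrow\mathcal{R}(\Sigma),\qquad G_t(g):=e^{2(1-t)f_g}\,\bar g(g).
\end{equation}
Then $G_0=\mathrm{id}$, $G_1(g)=\bar g(g)\in\mathcal{R}_h(\Sigma)$, and $G_t(g)=g$ whenever $g\in\mathcal{R}_h(\Sigma)$ (then $f_g\equiv0$). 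It remains to check $G_t(g)\in\mathcal{R}_<(\Sigma)$: the conformal change formula yields $K_{G_t(g)}=e^{-2(1-t)f_g}\bigl(-1-(1-t)\Delta_{\bar g}f_g\bigr)$, which is negative iff $(1-t)\Delta_{\bar g}f_g>-1$; since $g\in\mathcal{R}_<$ already forces $\Delta_{\bar g}f_g>-1$, a pointwise sign distinction gives $(1-t)\Delta_{\bar g}f_g\ge\min\{0,\Delta_{\bar g}f_g\}>-1$ for all $t\in[0,1]$. Hence $G$ is a strong deformation retraction of $\mathcal{R}_<(\Sigma)$ onto $\mathcal{R}_h(\Sigma)$.

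\emph{Step (iii) and the main obstacle.} Putting $\widetilde G_t:=\mathbb{X}\circ G_t\circ(\mathbb{X}|_{\mathcal{R}_<})^{-1}$, Steps (i)--(ii) make it a well-defined continuous strong deformation retraction of $\mathcal{A}_<(\Sigma)$ onto $\mathbb{X}(\mathcal{R}_h(\Sigma))$, which is the claim. I expect the genuine obstacle to be the analytic input in Step (ii) --- that the uniformising pair $(\bar g(g),f_g)$ depends continuously, in fact smoothly, on $g$ in the $C^\infty$ topology: existence/uniqueness of $f_g$ is standard, smooth dependence comes from the implicit function theorem in H\"older or Sobolev scales together with elliptic bootstrapping, and promoting this to the Fr\'echet category needs a little care (a tame/Nash--Moser argument, or the remark that smoothness in every $C^k$ yields smoothness in $C^\infty$). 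The only other point is the curvature bookkeeping showing the straight-line homotopy of conformal potentials stays strictly negatively curved, which is the sign estimate above.
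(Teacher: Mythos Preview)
Your proof is correct and takes a genuinely different route from the paper's. Both arguments work on the metric side and push the retraction forward through $\mathbb{X}$, but the paper invokes the Ricci flow: on a surface with $\chi(\Sigma)<0$ the normalised Ricci flow carries any metric to the hyperbolic representative of its conformal class, and one cites that negatively curved initial data stays negatively curved along the flow. You instead use uniformisation directly and interpolate linearly in the conformal potential; your one-line pointwise estimate $(1-t)\Delta_{\bar g}f_g \ge \min\{0,\Delta_{\bar g}f_g\} > -1$ replaces the appeal to curvature monotonicity under Ricci flow, making the argument more elementary and self-contained. The price you pay is having to argue continuous dependence of $(\bar g(g),f_g)$ on $g$, which you correctly identify as the only analytic subtlety; the Ricci flow approach hides this in the well-posedness of the flow.

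Your Step~(i) is also worth noting: the paper simply asserts that the image of a deformation retract under $\mathbb{X}$ is again a deformation retract, without checking that $\mathbb{X}|_{\mathcal{R}_<}$ is a homeomorphism onto $\mathcal{A}_<(\Sigma)$. Your reconstruction of $g$ from the horizontal part of $X_g^0$ via $b_g(v)=\langle D_g(v),v\rangle\,D_g(v)$ fills this gap cleanly and is a point the paper glosses over.
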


\begin{proof}
On $2$-dimensional surfaces, the space of negative sectional curvature metrics coincides with the space of negative-Ricci-curvature metrics, which is a contractible subset of $\mathcal{R}$. In fact,  {via} the Ricci flow one canonically deforms any metric $g$ into a hyperbolic one and, in particular, negatively curved metrics are deformed in such a way along a path of negatively curved metrics 
\cite{tuschmann2015moduli}. 
Thus, the space of hyperbolic metrics on $\Sigma$ is a deformation retract of the space of all negatively curved metrics $\mathcal{R}_<(\Sigma)$, and so will be its image under $\mathbb{X}$ with respect to the space $\mathcal{A}_{<}(\Sigma)$.
\end{proof}

\begin{theorem}\label{thm:Friedconjecture2d}
Consider the contact, Anosov-Riemannian structure {$(M, E, X_g, g)$} of Subsection \ref{subsec:geo-example}, where {$M=S^*_g\Sigma$ with} $\mathrm{dim}(\Sigma)=2$, the Euler characteristic $\chi(\Sigma)<0$ and such that $g\in\mathcal{R}_<(\Sigma)$. Then,
{

\begin{equation}
    |\zeta(X_g,0)|=\tau_\rho(M)^{-1}.
\end{equation}
}
\end{theorem}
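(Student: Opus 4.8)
The plan is to reduce the statement to the already-established machinery of Theorem~\ref{thm:FCT} and Theorem~\ref{bigtheorem}. First I would recall that Remark~\ref{rem:squaretorsion} gives $\tau_\rho(S_g^*\Sigma) = \tau_\rho(\Sigma)^2$, so the claimed identity is equivalent to $|\zeta(X_g,0)| = \tau_\rho(S_g^*\Sigma)^{-1}$; since $\dim\Sigma = 2$ we have $n=1$, so the target is really $|\zeta(X_g,0)|^{(-1)^n} = |\zeta(X_g,0)|^{-1} = \tau_\rho(S_g^*\Sigma)$, i.e. exactly Fried's Conjecture~\ref{conj:fried} for this case. So the task is to prove Fried's conjecture on negatively curved surfaces, assuming the hyperbolic case (Theorem~\ref{thm:FCT}) and the local constancy result (Theorem~\ref{bigtheorem}).

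The key steps, in order, are: \emph{(1)} Use the sphere bundle construction to view all geodesic vector fields on the fixed reference space $S_0^*\Sigma$ via the map $\mathbb{X}$; by the Proposition following Lemma~\ref{Anosovpreservation}, $\mathbb{X}$ sends $\mathcal{R}_<(\Sigma)$ into $\mathcal{A}_<(\Sigma) \subset \mathcal{A}(\Sigma)$, and pushing forward by the diffeomorphism $\sigma_g$ does not change the Ruelle zeta function, so $\zeta(X_g,0) = \zeta(X_g^0,0)$. \emph{(2)} Invoke Lemma~\ref{surfaceanosovcontractibility}: on a surface with $\chi(\Sigma)<0$, the image $\mathbb{X}(\mathcal{R}_h(\Sigma))$ is a deformation retract of $\mathcal{A}_<(\Sigma)$; in particular $\mathcal{A}_<(\Sigma)$ is connected and any $X_g^0$ with $g\in\mathcal{R}_<(\Sigma)$ lies in the same connected component as some $X_{g_h}^0$ coming from a hyperbolic metric $g_h$. \emph{(3)} Check that the relevant path of Anosov vector fields stays inside the open set $U$ of \emph{regular} Anosov vector fields of Theorem~\ref{bigtheorem}: for negatively curved surfaces the twisted de Rham complex is acyclic and the resolvents $R_k(\lambda)$ have no pole at $\lambda=0$ (this is essentially the content behind the hypothesis that $\zeta_\rho$ does not vanish at $0$, and is where one cites the surface analysis, e.g. \cite{dyatlov2017ruelle}), so the deformation retract path lies in $U$. \emph{(4)} Apply Theorem~\ref{bigtheorem}: $\zeta(\cdot,0)$ is locally constant on $U$, hence constant along the path, giving $\zeta(X_g,0) = \zeta(X_{g_h}^0,0) = \zeta(X_{g_h},0)$. \emph{(5)} Apply Theorem~\ref{thm:FCT} with $n+1 = 2$, i.e. $n=1$: for the hyperbolic surface $\zeta_\rho(0)^{(-1)^1} = \tau_\rho(\Sigma)^2$, so $|\zeta(X_{g_h},0)| = \tau_\rho(\Sigma)^{-2}$. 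Combining with step (4) yields $|\zeta(X_g,0)| = \tau_\rho(\Sigma)^{-2}$, as claimed.

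The main obstacle I expect is step \emph{(3)}: one must ensure the entire deformation-retract path of Anosov vector fields consists of \emph{regular} vector fields (no Pollicott--Ruelle resonance at $0$), so that Theorem~\ref{bigtheorem} actually applies along the whole path and not just at its endpoints. For surfaces this is known --- it follows from the analysis of \cite{dyatlov2017ruelle} showing the order of vanishing of $\zeta$ at $0$ is a topological invariant (here $|\chi(\Sigma)|$-related) and in the acyclic twisted case forces non-vanishing and regularity --- but stitching this together with the geometric deformation of Lemma~\ref{surfaceanosovcontractibility} (which only retracts within $\mathcal{A}_<(\Sigma)$, a priori not within $U$) requires a small argument that $\mathcal{A}_<(\Sigma)\cap U$ is still connected, or better that $\mathcal{A}_<(\Sigma)\subset U$ outright. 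A secondary bookkeeping point is to carefully track the sign $(-1)^n$ with $n=1$ and the squaring relation of Remark~\ref{rem:squaretorsion} so that the exponents match; this is routine but easy to get wrong.
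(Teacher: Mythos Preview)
Your proposal is correct and follows essentially the same route as the paper: retract the metric to a hyperbolic one via Lemma~\ref{surfaceanosovcontractibility}, verify regularity of every Anosov vector field along the path, invoke the local constancy of Theorem~\ref{bigtheorem}, and conclude via Theorem~\ref{thm:FCT} at the hyperbolic endpoint. The only refinement is in your step~(3): the paper does not argue regularity indirectly via the order-of-vanishing result of \cite{dyatlov2017ruelle} (which treats the trivial, non-acyclic representation), but instead shows directly that $\mathcal{A}_<(\Sigma)\subset U$ by ruling out resonant states at $\lambda=0$ for each $k\in\{0,1,2\}$ using \cite[Lemmas~7.2 and~7.4]{dang2018fried} together with acyclicity of the twisted complex.
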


\begin{proof}
In virtue of Lemma \ref{surfaceanosovcontractibility} there exists a smooth path $X(t)$ of Anosov vector fields in $\mathcal{A}_<(\Sigma)$ connecting $X_g$ to $X_{g_h}$ for some $g_h$ hyperbolic, where Theorem \ref{thm:FCT} holds: let $g(t)$ be a retraction such that $g(0)=g_0$ and $g(1)=g$, and set $X(t)=\mathbb{X}(g(t))$. If we can show that each $X(t)$ is regular then local constancy of the zeta function at zero (Theorem \ref{bigtheorem}) provides the result.

The idea for showing regularity is essentially in \cite{dyatlov2017ruelle} even though that setting is using the trivial representation, so that the de Rham complex is not acyclic. A more explicit version which adapts the spirit of \cite{dyatlov2017ruelle} is in \cite{dang2018fried}. Below we cite the necessary results of \cite{dang2018fried} to conclude that $X(t)$ is regular.

Since $g\in\mathcal{R}_<(\Sigma)$, its associated geodesic vector field $X_g$ on $S_g^*\Sigma$ is Anosov \cite{anosov1967geodesic, anosov1967some, arnold1968problemes}. 
For regularity, one shows that no $k$-form resonant states $\varphi^{(k)}$ associated with the spectral parameter $\lambda=0$ exist for $k\in\{0,1,2\}$ 
(recall Equation~\eqref{eqn:resonant-states-characterisation}).
By \cite[Lemma 7.4]{dang2018fried}, 0-form resonant states $\varphi^{(0)}$ are closed and smooth in this setting so may be identified with degree-0 de Rham cohomology. Acyclicity implies such states are necessarily the 0-section. For 2-form resonant states $\varphi^{(k)}$, there is an isomorphism with 0-form resonant states upon wedging with $d\alpha$ \cite[Lemma 7.2]{dang2018fried}. 
Effectively 
$\varphi^{(2)}=\varphi^{(0)}d\alpha$ 
so in this case also, no such resonant states exist. Finally, for 1-form resonant states, we require degree-1 cohomology to vanish and appeal to \cite[Lemma 7.2 Hypothesis (1)]{dang2018fried} to conclude no non-trivial resonant states exist.
\end{proof}

\begin{corollary}
Under the assumptions of Theorem \ref{thm:Friedconjecture2d}, denoting by
$$\mathbb{L}_X=\{(\A,\B)\in\mathcal{F}_{BF}\ |\ \iota_{\mathbb{X}(g)}\A=\iota_{\mathbb{X}(g)}\B=0\}$$
the Lagrangian submanifold defined by the Anosov vector field $\mathbb{X}(g)$, we have
{\begin{equation}
    Z(\mathcal{S}_{BF},\,\mathbb{L}_X)=\tau_\rho(M)^{-1},
\end{equation}}
for every $g\in\mathcal{R}_<(\Sigma)$.
\end{corollary}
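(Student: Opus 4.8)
The plan is to read the corollary off Theorem~\ref{Theorem} and Theorem~\ref{thm:Friedconjecture2d}, the only real work being to check that the hypotheses of the former hold in the situation of the latter and to match the parity exponents.

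First I would take care of the bookkeeping. With $\dim\Sigma=2$ the sphere bundle $S_0^*\Sigma$ (equivalently $S_g^*\Sigma$, identified through the diffeomorphism $\sigma_g$) has dimension $N=3$, so in the notation of Subsection~\ref{subsec:geo-contact} one has $N=2n+1$ with $n=1$. By construction $\mathbb{X}(g)=X_g^0$ is the pushforward under $\sigma_g$ of the geodesic (Reeb) vector field $X_g$, hence the Reeb vector field of the transported contact form $(\sigma_g^{-1})^*\alpha_g$ on $S_0^*\Sigma$; since $g\in\mathcal{R}_<(\Sigma)$ its geodesic flow is Anosov, and by Lemma~\ref{Anosovpreservation} so is the pushforward. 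Thus $(S_0^*\Sigma,E,\mathbb{X}(g))$ is admissible geometric data of Subsection~\ref{subsec:geo-contact}, and $\mathbb{L}_X$ is Lagrangian by Proposition~\ref{BFgaugefixingproposition}, so $Z(\mathcal{S}_{BF},\,\mathbb{L}_X)$ is the object treated by Theorem~\ref{Theorem}.

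Next I would verify the standing assumption of Theorem~\ref{Theorem}: that the Ruelle zeta function of $\mathbb{X}(g)$ admits a meromorphic continuation to $\C$ which does not vanish at $\lambda=0$. Meromorphy is automatic for Anosov flows by the results cited after \eqref{ruellezetafunction}. Finiteness of the value at zero (regularity of $\mathbb{X}(g)$) and its non-vanishing are precisely what is produced inside the proof of Theorem~\ref{thm:Friedconjecture2d}: along the retraction $g(t)$ each $X(t)=\mathbb{X}(g(t))$ is regular --- no $k$-form resonant states at $\lambda=0$ for $k\in\{0,1,2\}$, using acyclicity together with \cite[Lemmas 7.2, 7.4]{dang2018fried} --- and Theorem~\ref{bigtheorem} makes $X\mapsto\zeta(X,0)$ locally constant and nonvanishing along this path, so in particular $\zeta(\mathbb{X}(g),0)\in\C\setminus\{0\}$.

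With these in place I would just chain the identities. Theorem~\ref{Theorem}, specialised to $n=1$, gives
\begin{equation*}
    Z(\mathcal{S}_{BF},\,\mathbb{L}_X)=|\zeta_\rho(0)|^{(-1)^{n+1}}=|\zeta(\mathbb{X}(g),0)|.
\end{equation*}
Because $\sigma_g$ conjugates $\varphi_g$ to $\varphi_g^0$, it induces a bijection of primitive orbits preserving periods and the conjugacy classes of the linearised Poincar\'e maps and holonomies, so the trace-formula representation recalled in Remark~\ref{rem:zetafunX} gives $\zeta(\mathbb{X}(g),0)=\zeta(X_g,0)$. Finally Theorem~\ref{thm:Friedconjecture2d} yields $|\zeta(X_g,0)|=\tau_\rho(\Sigma)^{-2}$, and combining the three equalities proves $Z(\mathcal{S}_{BF},\,\mathbb{L}_X)=\tau_\rho(\Sigma)^{-2}$ for every $g\in\mathcal{R}_<(\Sigma)$. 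The only genuinely delicate point is the bookkeeping one of confirming that the regularity/non-vanishing input demanded by Theorem~\ref{Theorem} is exactly what the proof of Theorem~\ref{thm:Friedconjecture2d} establishes (and that the exponent $(-1)^{n+1}$ is $+1$ here); there is no further analytic content.
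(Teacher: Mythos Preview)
Your proposal is correct and follows exactly the same route as the paper: the corollary is obtained by combining Theorem~\ref{Theorem} with Theorem~\ref{thm:Friedconjecture2d}. The paper's proof is a single sentence invoking these two results, whereas you spell out the verification of the hypotheses of Theorem~\ref{Theorem} (regularity and non-vanishing at $\lambda=0$, the contact/Anosov structure transported via $\sigma_g$) and the parity bookkeeping $n=1$; this extra care is sound but not a different argument.
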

\begin{proof}
This follows from Theorems \ref{Theorem} and \ref{thm:Friedconjecture2d}.
\end{proof}

\begin{remark}
Observe that the above construction can be generalised to some extent. In general, the space of negative sectional curvature metric will not be path connected, so let us consider the connected components that contain a hyperbolic metric. The image under $\mathbb{X}$ of the disjoint union of such connected components in $\mathcal{A}(\Sigma)$ will be a union of  {islands} of Anosov vector fields, path connected to an anosov vector fields in $\mathrm{Im}(\mathbb{X}\vert_{\mathcal{R}_h})$. In a neighborhood of a hyperbolic metric Anosov vector field, the requirements of Theorem \ref{bigtheorem} are satisfied, and Fried's conjecture might be extended to open subsets of $\mathcal{A}_<(\Sigma)$. We defer the development of such a generalisation to a subsequent work. 
\end{remark}

\begin{remark}
The spirit of gauge-fixing homotopies is that of replacing an ill-defined integral with a well-defined one, and can be considered as providing a family of integral representations of the same quantity, only some of which are directly computable. From this point of view, the Ruelle zeta function at zero might not be computable for a generic (Anosov) vector field, but it will be once deformed away from an invalid point in $\mathcal{A}(\Sigma)$.
\end{remark}

We wish to interpret this result in terms of homotopies of Lagrangian submanifolds in $\mathcal{F}_{BF}$ and gauge fixing independence for $BF$ theory.

\begin{theorem}
Consider the geometric data {$(M, E, X_g, g)$} of Subsection \ref{subsec:geo-example}, a smooth path $ g_t: [0,1] \to \mathcal{R}(\Sigma)$ such that $\mathbb{X}(g_0)$ is regular, and let 
$$
\LL_t:=\{(\A,\B)\in\mathcal{F}_{BF}\ |\ \iota_{\mathbb{X}(g_t)}\A=\iota_{\mathbb{X}(g_t)}\B=0\}
$$ 
be the associated smooth family of Lagrangian submanifolds in $\mathcal{F}_{BF}$. Then,  $Z(\mathcal{S}_{BF},\,\mathbb{L}_0)\not = 0$ and
\begin{equation}
    \frac{d}{dt}\Big\vert_{t=0} Z(\mathcal{S}_{BF},\,\mathbb{L}_t) =0.
\end{equation}
Moreover, if $g_0$ is hyperbolic, we have that
\begin{equation}
    Z(\mathcal{S}_{BF},\,\mathbb{L}_0) = \tau_\rho(M).
\end{equation}
\end{theorem}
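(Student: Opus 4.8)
The plan is to combine the two gauge-fixing calculations already available in the paper with the local-constancy technology of Theorem~\ref{bigtheorem} (together with the homotopy-invariance encoded in Theorem~\ref{BVtheorem}). First I would observe that, by Theorem~\ref{Theorem} applied to each $\mathbb{X}(g_t)$ -- provided these vector fields are regular -- we have $Z(\mathcal{S}_{BF},\,\mathbb{L}_t) = |\zeta(\mathbb{X}(g_t),0)|^{(-1)^{n+1}}$, so the partition function along the family is nothing but (a power of) the Ruelle zeta at zero evaluated on the path of Anosov vector fields $t\mapsto\mathbb{X}(g_t)$. Hence the derivative statement $\frac{d}{dt}\big\vert_{t=0}Z(\mathcal{S}_{BF},\,\mathbb{L}_t)=0$ and the nonvanishing $Z(\mathcal{S}_{BF},\,\mathbb{L}_0)\neq 0$ both follow directly from local constancy and nonvanishing of $X\mapsto\zeta(X,0)$ on the open set $U$ of regular Anosov vector fields, as soon as one knows the path $t\mapsto\mathbb{X}(g_t)$ stays in $U$ near $t=0$.

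The key intermediate step is therefore to show that regularity of $\mathbb{X}(g_0)$ is an open condition along the path, i.e.\ that $\mathbb{X}(g_t)\in U$ for $t$ small. This is where I would invoke that $\mathbb{X}$ is continuous (indeed smooth) from $\mathcal{R}(\Sigma)$ into $\mathfrak{X}(S_0^*\Sigma)$ by construction, that $g_t$ is a smooth path, that the Anosov property is open (Lemma~\ref{Anosovpreservation} plus openness of the Anosov condition in $C^\infty(M,TM)$), and that $U$ itself is open by Theorem~\ref{bigtheorem}. Composing, $t\mapsto\mathbb{X}(g_t)$ is a continuous curve passing through $U$ at $t=0$, hence lies in $U$ on a neighborhood of $0$; on that neighborhood $\zeta(\mathbb{X}(g_t),0)$ is constant and nonzero, which transfers to $Z(\mathcal{S}_{BF},\,\mathbb{L}_t)$ via Theorem~\ref{Theorem}. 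One should be slightly careful here that $\mathcal{R}(\Sigma)$ is used only as the domain; $g_t$ need not be of negative curvature for all $t$, but it must be at least close enough to a vector field satisfying the hypotheses of Theorem~\ref{Theorem} (contact and Anosov) -- strictly, Theorem~\ref{Theorem} only needs $X$ with a meromorphic Ruelle zeta nonvanishing at $0$, which is exactly membership in $U$, so no contact hypothesis on the nearby $\mathbb{X}(g_t)$ is actually required.

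For the final claim, if $g_0$ is hyperbolic then $\mathbb{X}(g_0)=X_{g_0}^0$ is (conjugate via $\sigma_{g_0}$ to) the geodesic Reeb field of a hyperbolic metric, so Theorem~\ref{thm:FCT} gives $|\zeta_\rho(0)|^{(-1)^n}=\tau_\rho(\Sigma)^2$, equivalently $|\zeta_\rho(0)|^{(-1)^{n+1}}=\tau_\rho(\Sigma)^{-2}$. Here I would use Remark~\ref{rem:squaretorsion}, namely $\tau_\rho(M)=\tau_\rho(\Sigma)^2$ for $M=S_g^*\Sigma$ with a unitary representation, together with $\dim\Sigma=n$ and $N=\dim M = 2n+1$ so that the sign bookkeeping works out. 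Combining with $Z(\mathcal{S}_{BF},\,\mathbb{L}_0)=|\zeta(\mathbb{X}(g_0),0)|^{(-1)^{n+1}}$ from Theorem~\ref{Theorem}, and noting that $\zeta$ is a diffeomorphism invariant so $\zeta(X_{g_0}^0,0)=\zeta(X_{g_0},0)$, yields $Z(\mathcal{S}_{BF},\,\mathbb{L}_0)=\tau_\rho(\Sigma)^{-2}=\tau_\rho(M)^{-1}$; one should double-check the exponent against the precise conventions in Definition~\ref{Def:partitionfunction} and the statement of Theorem~\ref{Theorem} (the displayed conclusion $Z=\tau_\rho(M)$ versus $\tau_\rho(M)^{-1}$ depends on which power of the superdeterminant is taken), and align signs accordingly.

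The main obstacle I anticipate is not any single hard estimate but rather the careful verification that the path stays inside the good locus $U$ and the consistent tracking of the various $(-1)$-powers and inversions: the regularisation conventions mean that $Z$, $\zeta$, $\tau$ appear with different exponents in Theorems~\ref{Theorem}, \ref{thm:FCT}, \ref{thm:Friedconjecture2d} and Remark~\ref{rem:squaretorsion}, and getting the final normalisation exactly right (in particular whether the answer is $\tau_\rho(M)$ or its inverse, and whether $n$ even versus odd introduces a discrepancy that forces a hypothesis on the parity of $n$ or on the acyclic twist) is the delicate point. Everything analytic -- openness of the Anosov and regularity conditions, smoothness of $\mathbb{X}$, local constancy -- is already packaged in the cited results.
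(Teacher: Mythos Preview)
Your approach is essentially identical to the paper's: apply Theorem~\ref{Theorem} to identify $Z(\mathcal{S}_{BF},\mathbb{L}_t)$ with a power of $|\zeta(\mathbb{X}(g_t),0)|$, invoke Theorem~\ref{bigtheorem} for openness of $U$ and local constancy (the paper cites Lemma~\ref{Anosovpreservation} for the Anosov property along the path, exactly as you do), and then use Theorem~\ref{thm:FCT} at the hyperbolic endpoint. Your caution about the exponents is well placed: the paper itself is not internally consistent on whether the power is $(-1)^n$ or $(-1)^{n+1}$ (compare Theorem~\ref{Theorem} with the proof of this theorem), and your closing remark that the discrepancy between $\tau_\rho(M)$ and $\tau_\rho(M)^{-1}$ needs to be reconciled against the conventions is the right diagnosis. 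One small slip: in Subsection~\ref{subsec:geo-example} the convention is $\dim\Sigma=n+1$, not $n$, so your dimension bookkeeping should be adjusted accordingly (the Conjecture~\ref{conj:fried} statement with $\dim\Sigma=n$ is a typo in the paper).
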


\begin{proof}
Lemma \ref{Anosovpreservation} ensures that $\mathbb{X}(g_t)=d\sigma_{g_t}X_{g_t}$ is a smooth path of Anosov vector fields in $S_0^*\Sigma$, and in virtue of Theorem \ref{Theorem} we have that 
$$
Z(\mathcal{S}_{BF},\,\mathbb{L}_t) \equiv Z(\mathbb{S}_R\vert_{\mathbb{L}_t}) = |\zeta(\mathbb{X}(g_t),0)|^{(-1)^n}.
$$ 
By assumption, at $g_0$ the $k$-th Ruelle zeta factors $\zeta_k(\mathbb{X}(g_0),0)$ are well defined and different from zero. Then, in virtue of Theorem \ref{bigtheorem} $\zeta(\mathbb{X}(g_0),0)$  is constant in a open neighborhood of $\mathbb{X}(g_0)$, hence it is on the whole path $g_t$ for $t\in[0,T)$ for some appropriately chosen $T$ and in particular its derivative at $t$ vanishes. If we choose $g_0$ hyperbolic, using Theorem \ref{thm:FCT} we can conclude that 
$$
Z(\mathcal{S}_{BF},\,\mathbb{L}_0) = |\zeta(\mathbb{X}(g_0),0)|^{(-1)^n} \equiv |\zeta_{\rho}(M)|^{(-1)^n}= \tau_\rho(M).
$$
\end{proof}

\section*{Acknowledgements} This research was (partly) supported by the NCCR SwissMAP, funded by the Swiss National Science Foundation. MS acknowledges partial support from Swiss National Science Foundation grants P2ZHP2\_164999 and P300P2\_177862. SK acknowledges partial support from Deutsche Forschungs Gemeinschaft (DFG) under Graduiertenkolleg 1821 (Cohomological Methods in Geometry). This project began, and was mostly developed, during the period when the first and last author were at UC Berkeley, to which we are grateful for having provided such a stimulating research environment. M.S. would like to thank the University of Freiburg for hospitality. We are indebted to A.S. Cattaneo and P. Mnev for their support and feedback at different stages of this work. 
Since the preprint appeared on arXiv, we have received several useful comments from Y. Chaubet and N.V. Dang whom we warmly thank. We also thank the anonymous referee for providing useful comments that helped improve this paper.

\bibliographystyle{alpha}
\bibliography{Ruelle-BF-Arxiv.bib}

\end{document}